\documentclass{amsart}
\usepackage{amsfonts}
\usepackage{amsmath}
\usepackage{amssymb}
\usepackage{amsthm}
\usepackage{color}
\usepackage{comment}

\usepackage[foot]{amsaddr}

\usepackage{pdfpages}

\newtheorem{Thm}{Theorem}
\newtheorem{Lem}[Thm]{Lemma}
\newtheorem{Prop}[Thm]{Proposition}

\theoremstyle{definition}
\newtheorem{Def}[Thm]{Definition}

\theoremstyle{remark}
\newtheorem{Rem}[Thm]{Remark}

\numberwithin{equation}{section}

\def\RR{{\mathbb{R}}}

\def\bigO{\mathcal{O}}


\title[Don't cry to be the first! Symmetric fair division algorithms exist]{Don't cry to be the first!\\
Symmetric fair division algorithms exist.}

\date{\today}

\author[G.~Ch\`eze]{Guillaume Ch\`eze}
\address{Guillaume Ch\`eze: Institut de Math\'ematiques de Toulouse\\
Universit\'e Paul Sabatier \\
118 route de Narbonne\\
31 062 TOULOUSE cedex 9, France}
\email{guillaume.cheze@math.univ-toulouse.fr}



\begin{document}


\begin{abstract}
In this article we study a cake cutting problem. More precisely, we study \emph{symmetric} fair division algorithms, that is to say we study algorithms where  the order of the players does not influence the value obtained by each player.
In the first part of the article, we give a symmetric and envy-free fair division algorithm. More precisely, we show how to get a symmetric and envy-free fair division algorithm from an envy-free division algorithm. \\
In the second part, we  give a proportional and symmetric fair division algorithm with a complexity in $\bigO(n^3)$ in the Robertson-Webb model of complexity. This algorithm is based on Kuhn's algorithm. Furthermore, our study has led us to study aristotelian fair division. This notion is an interpretation of Aristotle's principle: give equal shares to equal people.\\
We conclude this article with a discussion and some questions about the Robertson-Webb model of computation.
\end{abstract}

\keywords{Computational Fair division, Cake cutting}

\maketitle

\section*{Introduction}
In this article we study the problem of fair resource allocation.  It consists to share an heterogeneous good between different players or agents. This good can be for example: a cake, land, time or computer memory. This problem is old. For example, the Rhind mathematical papyrus contains problems about the division of loaves of bread and about partition of plots of land. In the Bible we find the famous ``Cut and Choose" algorithm and in the greek mythology we find the trick at Mecone.\\
  The problem of fair division has been formulated in a scientific way by Steinhaus in 1948, see \cite{Steinhaus}. Nowadays, there exists several papers, see e.g. \cite{DubinsSpanier, EvenPaz, EdmondsPruhs, BramsTaylorarticle, RoberstonWebbarticle, Pikhurko, Thomson2006, Procacciasurvey, BJK, AzizMackenzie}, and books about this topic, see e.g. \cite{RobertsonWebb,BramsTaylor, Procacciachapter,Barbanel}. These results appear in the mathematics, economics, political science, artificial intelligence and computer science literature. Recently, the cake cutting problem has been studied intensively by computer scientists for solving resource allocation problems in multi agents systems, see e.g.~\cite{Chevaleyre06,Chen,Dynamic,Branzei}. \\
 
 Throughout this article, the cake will be an heterogeneous good represented by the interval $[0,1]$.
We consider $n$ players and we associate to each player a non-atomic probability measure $\mu_i$ on the interval $X=[0,1]$. These measures represent the utility functions of the player. The set $X$ represents the cake and we have $\mu_i(X)=1$ for all $i$. The problem in this situation is to get a fair division of $X=X_1\sqcup \ldots \sqcup X_n$, where the $i$-th player get $X_i$.\\

A practical problem is the computation of fair divisions. In order to describe algorithms we thus need a model of computation. There exist two main classes of cake cutting algorithms: discrete and continuous protocols (also called moving knife methods). Here, we study discrete algorithms. These kinds of algorithms can be  described thanks to the  classical model introduced by Robertson and Webb and formalized by Woeginger and Sgall in \cite{Woeg}. In this model we suppose that a mediator interacts with the agents. The mediator asks two type of queries: either cutting a piece with a given value, or evaluating a given piece. More precisely, the two type of queries allowed are:
\begin{enumerate}
\item $eval_i(x,y)$: Ask agent $i$ to evaluate the interval $[x,y]$. This means return $\mu_i([x,y])$.
\item $cut_i(x,a)$: Ask agent $i$ to cut a piece of cake $[x,y]$ such that $\mu_i([x,y])=a$. This means: for given $x$ and $a$, return $y$ such that $\mu_i([x,y])=a$.
\end{enumerate} 
In the Robertson-Webb model the mediator can adapt the queries from the previous answers given by the players. In this model, the complexity counts the finite number of queries necessary to get a fair division. For a rigourous description of this model we can consult: \cite{Woeg,Branzei2017}.\\

When we design a cake cutting algorithm, we have to precise what is the meaning of a fair division. Indeed, there exists different notions of fair division.\\ 
We say that a division is \emph{proportional} when for all $i$, we have $\mu_i(X_i) \geq 1/n$.\\
We say that a division is \emph{envy-free} when for all $i \neq j$, we have $\mu_i(X_i) \geq  \mu_i(X_j)$.\\
We say that a division is \emph{equitable} when for all $i\neq j$, we have $\mu_i(X_i)=\mu_j(X_j)$.\\

The first studied notion of fair division has been proportional fair division, \cite{Steinhaus}. Proportional fair division is a simple and well understood notion. In \cite{Steinhaus} Steinhaus explains the Banach-Knaster algorithm, also called last diminisher algorithm, which gives a proportional fair division. There also exists an optimal algorithm to compute a proportional fair division in the Robertson-Webb model, see \cite{EvenPaz,EdmondsPruhs}. The complexity of this algorithm is in $\bigO\big(n\log(n)\big)$. Furthermore, the portion $X_i$ given to the $i$-th player in this algorithm is an interval.\\

It is more difficult to get an envy-free fair division. Indeed, whereas envy-free fair divisions where each $X_i$ is an interval exist, there does not exist an algorithm in the Robertson-Webb model computing such divisions. These   results have been proved by Stromquist in \cite{Stromquistexist,Stromquist}. The first envy-free algorithm has been given by Brams and Taylor in  \cite{BramsTaylorarticle}. This algorithm has been given  approximatively 50 years after the first algorithm computing a proportional fair division. The Brams-Taylor algorithm has an unbounded complexity in the Robertson-Webb model. This means that we cannot bound the complexity of this algorithm in terms of the number of players only.  It is only recently that a finite and unbounded algorithm has been given to solve this problem \cite{AzizMackenzie}. The complexity of this algorithm is in $\bigO\Big(n^{n{^{n^{n^{n^{n}}}}}}\Big)$. A lower bound for envy-free division algorithm has been given by Proccacia in \cite{Procaccia-lowerbound}. This lower bound is in $\bigO(n^2)$.\\

Equitable fair divisions have been less studied than proportional and envy-free divisions. However, there exist some results showing the difficulty to get such fair divisions. Indeed, there exist equitable fair divisions where each $X_i$ is an interval, see \cite{Cechexistence,Segal-Halevi,Chezeequitable}. However, there do not exist algorithms computing an equitable fair division, see \cite{Cech,ProcWang,ChezeBSSRW}. \\

In practice, a cake cutting algorithm $\mathcal{F}$ has in inputs a list of measures $\underline{\mu}=[\mu_1,\ldots,\mu_n]$, and returns a partition 
$X=\mathcal{F}(X,\underline{\mu},1) \sqcup \ldots \sqcup \mathcal{F}(X,\underline{\mu},n)$, where each $\mathcal{F}(X,\underline{\mu},i)$ is a finite union of disjoint intervals.  The set $\mathcal{F}(X,\underline{\mu},i)$ is the part given to the $i$-th player appearing in the the list $\underline{\mu}$ when we apply the algorithm $\mathcal{F}$ to this list of measures.\\

The definition of proportional or envy-free fair division is independent of the order of the players in the list $\underline{\mu}$. However, this order is  important in cake-cutting algorithms. For example, the role of the two players in the ``Cut and Choose" algorithm are not symmetric. This leads the definition of \emph{symmetric fair division algorithm}.\\
\begin{Def}
We denote by $\underline{\mu}^{\sigma}$ the list $\underline{\mu}^{\sigma}=[\mu_{\sigma(1)},\ldots,\mu_{\sigma(n)}]$, where $\sigma$ belongs to the permutation group  $\mathfrak{S}_n$. 
A cake cutting algorithm $\mathcal{F}$ is \emph{symmetric}  when 
$$\forall i \in \{ 1, \ldots, n\}, \forall \sigma \in \mathfrak{S}_n, \,  \mu_i\big(\mathcal{F}(X,\underline{\mu},i)\big)=\mu_i\big(\mathcal{F}(X,\underline{\mu}^{\sigma},\sigma^{-1}(i))\big).$$
\end{Def}
For example, if $n=3$ and $\sigma=(1\,2\,3)$ then a symmetric fair division algorithm satisfies:
$$\mu_1\big(\mathcal{F}(X,[\mu_1,\mu_2,\mu_3],1)\big)=\mu_1\big(\mathcal{F}(X,[\mu_2,\mu_3,\mu_1],3)\big).$$
A cake cutting algorithm is symmetric means  whatever the order of the measure given in inputs, all  players will receive the same value of the cake. Indeed, $\mathcal{F}(X,[\mu_2,\mu_3,\mu_1],3)$ is the portion given to the third player in the list $[\mu_2,\mu_3,\mu_1]$. Thus, this corresponds to the portion given to the player with measure $\mu_1$ when the algorithm $\mathcal{F}$ as in input the list  $[\mu_2,\mu_3,\mu_1]$. Thus, if the player with associated measure $\mu_1$ is in the first or in the last position in the inputs he or she will get a portion with the same measure relatively to his or her preference $\mu_1$. Therefore, there is no advantage to be the first in the list $\underline{\mu}$. The measure of the received  portion is independent of the position of a player in the list.\\
 This notion has been introduced by Manabe and Okamoto in \cite{ManabeOkamoto}. They call this kind of fair division  \emph{meta envy-free}. In this article we call this property \emph{symmetric} in order to emphasize the role of the permutations of the players. In their paper Manabe and Okamoto have shown that classical algorithms such as Selfridge-Conway, and Brams-Taylor's algorithms are not symmetric. Then they
 have given a symmetric and envy-free algorithm for 4 players and ask if it is possible to get such a division protocol for $n\geq 4$ players. Here, we answer to this question and we prove the following result:
\begin{Thm}
There exists  deterministic symmetric  and envy-free cake cutting algorithms.
\end{Thm}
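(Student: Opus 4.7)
The plan is to symmetrize any given envy-free algorithm $\mathcal{F}_0$ (one is provided by Aziz and Mackenzie) by running it on every permutation of the input list and then averaging the resulting allocations. More precisely, given the input $\underline{\mu}$, for each $\sigma \in \mathfrak{S}_n$ I would run $\mathcal{F}_0$ on $\underline{\mu}^\sigma$ to produce an envy-free partition $X = A_1^\sigma \sqcup \cdots \sqcup A_n^\sigma$, in which the piece $A_j^\sigma$ is attributed to the player sitting in position $j$, that is, the player with measure $\mu_{\sigma(j)}$. Writing $B_i^\sigma := A_{\sigma^{-1}(i)}^\sigma$ for the bundle that player $i$ receives during the $\sigma$-run, the target of the symmetric algorithm $\mathcal{F}$ is the partition satisfying
\[
 \mu_j\big(\mathcal{F}(X,\underline{\mu},i)\big) \;=\; \frac{1}{n!} \sum_{\sigma \in \mathfrak{S}_n} \mu_j(B_i^\sigma) \qquad \text{for every } i,j.
\]
To realise this, I would take the common refinement $\{C_1,\ldots,C_K\}$ of the $n!$ allocations $\{A^\sigma\}$, subdivide each atom $C_k$ into $n!$ sub-pieces on which every player agrees to see the value $\mu_\cdot(C_k)/n!$, and then reassemble: player $i$ receives exactly $d_{i,k} := \#\{\sigma : C_k \subset B_i^\sigma\}$ of these sub-pieces of $C_k$. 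A direct count confirms the displayed formula.

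Granting that this construction is feasible, both properties drop out easily. Symmetry is a bookkeeping observation: replacing $\underline{\mu}$ by $\underline{\mu}^\tau$ merely reindexes the family of runs via $\sigma \mapsto \tau\sigma$, the common refinement is left invariant, and the sum on the right-hand side is unchanged, so $\mu_i(\mathcal{F}(X,\underline{\mu},i))$ is independent of the ordering of the players. Envy-freeness is also a summation argument: since each run of $\mathcal{F}_0$ is envy-free we have $\mu_i(B_i^\sigma) \geq \mu_i(B_j^\sigma)$ for every $\sigma$, and averaging in $\sigma$ gives $\mu_i(\mathcal{F}(X,\underline{\mu},i)) \geq \mu_i(\mathcal{F}(X,\underline{\mu},j))$ for every pair $i \neq j$.

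The hard part is the reassembly step. Chopping each atom $C_k$ into $n!$ pieces that \emph{every} player declares to be of equal value is a consensus partition, a requirement strictly stronger than envy-freeness. I would therefore expect the argument to spend most of its effort producing such subdivisions deterministically out of $\mathcal{F}_0$ (or out of an independent equalization routine), and verifying that the resulting protocol terminates in finitely many Robertson-Webb queries. If instead an exact consensus division proves too expensive, one can try to sidestep the issue by modifying the combining rule, for instance by a recursive, permutation-indexed scheme that delivers the averaged formula without requiring each atom to be split into exactly $n!$ consensual copies; but in either form, making the combining step rigorous is the crux of the proof.
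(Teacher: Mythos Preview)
Your averaging idea is natural, but the reassembly step is not a detail to be filled in later --- it is an actual obstruction. Cutting each atom $C_k$ into $n!$ pieces that \emph{all} $n$ players value equally is an instance of exact (consensus/perfect) division, and it is known that no finite deterministic Robertson--Webb protocol can produce such a division (the paper itself recalls this for equitable divisions; perfect division is stronger still). So the step you flag as ``the hard part'' is in fact impossible in the model, not merely expensive, and your fallback suggestion of a ``recursive, permutation-indexed scheme'' is too vague to carry the argument. As written, the proposal does not yield a deterministic algorithm.

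The paper sidesteps averaging entirely. It also runs the base envy-free algorithm $f$ on all $n!$ orderings, but then it \emph{selects} one of the resulting partitions rather than combining them. The selection is done via two total orders on partitions (a graded order on the vector of cut points, then a lexicographic order on an associated word) that depend only on the geometry of the partition, not on which player owns which piece; hence the set of minimizers is invariant under permuting the input list. The remaining issue is ties: if two minimizing partitions survive, a short lemma shows they must differ by a permutation of the pieces, and then envy-freeness of $f$ forces each player to value his piece identically in both. Thus any tie-break yields the same value to every player, and the algorithm is symmetric. This ``select canonically'' trick avoids the need for any consensus cut and is the key idea you are missing.
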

In order to prove this result we show how to construct such an algorithm from an envy-free algorithm. The idea is to use an already existing  envy-free algorithm $f$, see e.g. \cite{BramsTaylorarticle, RoberstonWebbarticle,Pikhurko,AzizMackenzie} and to construct from it a symmetric and envy-free algorithm $\mathcal{F}$. In order to get a symmetric algorithm we compute all $f(\underline{\mu}^{\sigma})$ and then we take the ``best" one. Here ``best" will mean : satisfy some topological conditions, e.g. we select a partition with the minimal number of cuts.\\
 
 Our approach computes  $n!$ envy-free divisions, thus  this gives an algorithm with a huge complexity in the Robertson-Webb model. Furthermore, our algorithm gives  a proportional division since it gives an envy-free division. A natural question is then: Can we get a symmetric and proportional division algorithm with a polynomial complexity?\\
We prove in Section \ref{sec:propvrac} the following result:
 
 \begin{Thm}
 There exists a deterministic symmetric and proportional  algorithm which uses at most $\bigO(n^3)$ queries in the Robertson-Webb model.
 \end{Thm}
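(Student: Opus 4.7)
My plan is to build the algorithm $\mathcal F$ in three phases: a symmetric marking phase, a symmetric evaluation phase, and a canonical combinatorial assignment based on Kuhn's Hungarian algorithm. First, I ask every player $i$ to produce, via $n-1$ successive $cut_i$ queries, the points $0=c_{i,0}<c_{i,1}<\cdots<c_{i,n-1}<c_{i,n}=1$ slicing $[0,1]$ into $n$ intervals of $\mu_i$-measure $1/n$. This uses $n(n-1)=\bigO(n^2)$ cut queries and is manifestly symmetric, since every player is asked exactly the same thing. Next, I sort the $n(n-1)$ interior marks as $0=t_0<t_1<\cdots<t_N=1$ with $N\le n(n-1)+1$ (a purely internal bookkeeping step, needing no queries) and call $eval_i(t_{s-1},t_s)$ for every pair $(i,s)$. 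This gives an $n\times N$ value matrix $M=(m_{i,s})$ with $m_{i,s}=\mu_i([t_{s-1},t_s])$, costs $\bigO(n^3)$ eval queries, and again depends only on the multiset $\{\mu_1,\dots,\mu_n\}$ rather than on its ordering.

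The third phase uses only the matrix $M$, with no further Robertson--Webb queries. I compute a partition $A_1\sqcup\cdots\sqcup A_n$ of $\{1,\dots,N\}$ such that $\sum_{s\in A_i}m_{i,s}\ge 1/n$ for each $i$, and output $\mathcal F(X,\underline\mu,i)=\bigcup_{s\in A_i}[t_{s-1},t_s]$. For existence, I form a bipartite graph $H$ with players on one side and the $n$ ordered ``slots'' of a sequential partition on the other, linking player $i$ to slot $k$ when one can select consecutive cuts $y_0=0<y_1<\cdots<y_n=1$ with $y_k\in\{c_{1,k},\dots,c_{n,k}\}$ so that the $k$-th slot $[y_{k-1},y_k]$ contains player $i$'s own $k$-th piece $[c_{i,k-1},c_{i,k}]$ of $\mu_i$-measure $1/n$. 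A Hall-type argument supplies a perfect matching, which Kuhn's Hungarian algorithm constructs in polynomial time from $M$; refining the resulting sequential partition to the common refinement $\{[t_{s-1},t_s]\}$ yields the desired $A_i$'s. To secure symmetry I select the canonical matching, for instance the lexicographically smallest one with respect to the intrinsic cake-coordinates $(y_1,\dots,y_{n-1})$.

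Phases~1 and~2 are routine once the $\bigO(n^3)$ budget is accepted; the substance of the argument lies in Phase~3. There I foresee three hurdles: (i)~establishing existence of a feasible sequential proportional assignment, for which the Hall-theorem formulation underlying Kuhn's method is the natural tool; (ii)~verifying that the Hungarian algorithm runs entirely on the data already collected in Phases~1--2, so that no additional query is needed and the $\bigO(n^3)$ bound is genuinely met; and (iii)~making the chosen matching canonical in a player-independent way so that $\mathcal F$ satisfies the symmetry definition from the introduction. I expect hurdle (iii) to be the main obstacle: the tie-breaking rule must refer only to intrinsic data (positions $t_s$ on the cake and valuations $m_{i,s}$), and never to a player's index in $\underline\mu$, otherwise the identity $\mu_i(\mathcal F(X,\underline\mu,i))=\mu_i(\mathcal F(X,\underline\mu^\sigma,\sigma^{-1}(i)))$ breaks.
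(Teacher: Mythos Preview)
Your query-budget analysis (Phases~1--2 cost $\bigO(n^3)$, Phase~3 costs nothing) is sound, and the overall architecture---collect all data symmetrically, then assign pieces by a canonical rule that ignores player indices---is a reasonable one-shot alternative to the paper's recursive \texttt{SymProp}. But Phase~3 as written has two real gaps.

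First, the bipartite graph $H$ is not well-defined. You declare an edge between player $i$ and slot $k$ ``when one can select consecutive cuts $y_0<\cdots<y_n$'' making $[c_{i,k-1},c_{i,k}]\subseteq[y_{k-1},y_k]$; but for any single pair $(i,k)$ such $y$'s always exist (take $y_{k-1}=\min_j c_{j,k-1}$, $y_k=\max_j c_{j,k}$), so $H$ is the complete bipartite graph and carries no information. What you actually need is a \emph{simultaneous} choice of $(y_1,\dots,y_{n-1})$ and a perfect matching compatible with it, which is not a plain Hall-condition problem. The paper avoids this by fixing \emph{one} partition first---the graded-order minimum among the $n$ players' own $n$-partitions---and then invoking Kuhn's Frobenius--K\"onig argument on that fixed partition.

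Second, and more seriously, your tie-breaking rule does not secure symmetry. Lexicographically minimizing $(y_1,\dots,y_{n-1})$ pins down the \emph{partition} but not the \emph{matching}; several perfect matchings may coexist for the same cut vector. If two players $p,q$ have identical rows in your matrix $M$ (equivalently identical mark vectors $(c_{p,k})_k=(c_{q,k})_k$), no rule based solely on cake positions and valuations can separate them, so the final step must fall back on their indices. Under the transposition $\sigma=(p\ q)$ the intrinsic data is unchanged, yet the index-based tie-break now hands $p$ the piece that $q$ previously received; symmetry then demands $\mu_p(A)=\mu_p(B)$ for the two pieces $A,B$ involved, and nothing in your construction forces this equality. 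The paper's \texttt{SymProp} sidesteps precisely this trap: at every recursion level it only \emph{allocates} pieces to players $i$ whose mark vector equals the selected partition vector $(x_{0,0},\dots,x_{0,n})$, i.e.\ players who value \emph{every} piece $A_j$ at exactly $\mu_i(\mathcal X)/n$; for them any assignment yields the same value, so index-based tie-breaking is harmless. All other players are regrouped (by their valuation profile on the $A_j$'s) and pushed to a recursive call on a smaller subcake. That grouping-and-deferral mechanism is the missing idea in your Phase~3.
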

  
  The deterministic assumption is important. We do not want to get a situation where a player could think that he is unlucky.\\
  We can already remark that the Evan-Paz algorithm, see \cite{EvenPaz}, and the last diminisher procedure are not deterministic and not symmetric. Indeed, if during these algorithms several players cut the cake at the same point, then this tie is usually breaked with a random process. Another way to break the tie is to use the order on the players. For example, if all the players in the first step of the Evan-Paz algorithm cut the cake at the same point, then  we can give to the players $1, \ldots, \lfloor n/2 \rfloor$ the left part of the cake and to the other players the right part of the cake. This tie breaking method depends on the order the players and thus it does not give a symmetric procedure.\\

At last, in this article we study also another fair division notion. This notion comes from the study of symmetric fair divisions in a particular case: Suppose that $\mathcal{F}$ is a symmetric fair division algorithm. Then we have
$$\mu_1\big( \mathcal{F}(X,[\mu_1,\mu_2,\mu_3],1) \big) = \mu_1\big( \mathcal{F}(X,[\mu_2,\mu_1,\mu_3],2) \big).$$
Now, suppose  that $\mu_1=\mu_2$, this gives
$$\mu_1\big( \mathcal{F}(X,[\mu_1,\mu_2,\mu_3],1) \big) = \mu_2\big( \mathcal{F}(X,[\mu_1,\mu_2,\mu_3],2) \big).$$
This means that if two players have the same measure then they consider as equal the portions they get. We call a fair division satisfying this property an ``\emph{aristotelian fair division}".

\begin{Def}
We say that we have an aristotelian division when $\mu_i=\mu_j$ implies  $\mu_i(X_i)=\mu_j(X_j)$.
\end{Def}

We have given the name ``aristotelian fair division" to this kind of fair divisions because in the  Nicomachean Ethics by Aristotle (Book~V) we find:\\

\emph{ ``\ldots it is when equals possess or are allotted unequal shares, or persons not equal equal shares, that quarrels and complaints arise."}\\

Therefore, aristotelian fair division is not a new notion. This notion has been already studied, see e.g.~\cite{Maniquet,MoulinBook,MoulinSym}. In the literature this notion also appears as ``Equal Treatment of Equals".\\

We remark that symmetric fair division algorithms give aristotelian fair divisions. However, the converse is not true.\\

As a first step towards the construction of a symmetric and proportional fair division algorithm, we describe in Section~\ref{sec:propvrac} an aristotelian and proportional fair division algorithm. This algorithm needs $\bigO(n^3)$ queries but less arithmetic operations than the symmetric and proportional algorithm.\\
 We remark easily  that 
  an envy-free division is always proportional and aristotelian, but a fair division which is aristotelian and proportional  is less demanding than an envy-free division. However, to the author's knowledge all existing aristotelian proportional fair division algorithms were envy-free algorithms.\\
Thus our algorithm shows that if we just want an aristotelian proportional fair division it is not necessary to use an envy-free algorithm which uses an exponential number of queries.

\subsection*{Structure of the paper} In Section~\ref{sec:symenvyfree}, we give a symmetric and envy-free fair division algorithm. Then, we give some remarks about the complexity of this algorithm. In this first section, we also discuss the problem of symmetric and envy-free fair division in the approximate setting. In Section~\ref{sec:propvrac}, we explain why the Evan-Paz and the last diminisher algorithm do not give aristotelian fair division. Then we give an aristotelian proportional fair division algorithm and next a symmetric and proportional fair division algorithm. In Section~\ref{sec:conclusion}, we conclude this article with several questions about symmetric and aristotelian fair divisions and the Robertson-Webb model of computation.

\section{An envy-free and symmetric cake cutting algorithm}\label{sec:symenvyfree}
\subsection{Two orders on partitions and one algorithm}
In this section we introduce two different orders on the partitions. These orders will be used to choose a ``good" partition among the $n!$ possible fair divisions given by all  $f(\underline{\mu}^{\sigma})$, where $f$ is a fair division procedure.\\

In this section, when we study a partition $X=X_1 \sqcup \ldots \sqcup X_n$, $X_i$ will be the part given to the $i$-th player.\\

For each  partition $X=X_{1} \sqcup \ldots \sqcup X_{n}$ we set 
$$X_{i}=\bigsqcup\limits_{j \in I_i} [x_{i,j},x_{i,j+1}],   \textrm{ where } I_i \textrm{ is a finite set.}$$
Thus $$X=\bigsqcup\limits_{i=1}^{n} \bigsqcup\limits_{j \in I_i} [x_{i,j},x_{i,j+1}]$$
and  
$$X=\bigsqcup\limits_{l=0}^{M}[z_l,z_{l+1}]$$
 where $z_0=0$, $z_{M+1}=1$, $z_l=x_{i,j}$ and $z_l < z_{l+1}$. From this partition we construct a  vector $(z_1,\ldots,z_M) \in \RR^M$. We say that  $M+1$ is the size of the partition.
\begin{Def}
The graded order on $\sqcup_{k=1}^{\infty} \RR^k$ is the following:\\
Let $(x_1,\ldots,x_M) \in \RR^M$ and $(y_1,\ldots, y_N) \in \RR^N$ we have:
\begin{eqnarray*}
(y_1,\ldots, y_N) \succ_{gr} (x_1,\ldots,x_M) &\iff & N>M\\
&&\textrm{ or } N=M \textrm{ and } y_1>x_1,\\
&& \textrm{ or }N=M,  \exists j> 1 \textrm{ such that } y_i=x_i \textrm{ for  } i< j\\
&& \hphantom{ or } \textrm{ and } y_j >  x_j.
\end{eqnarray*}
\end{Def}

The graded order gives thus an order on the partitions.\\

Now, we give an algorithm which computes a word over the alphabet $a_1,\ldots,a_n$ from a partition. The $l$-th letter of the word $\omega$ is denoted by $\omega(l)$.\\

\texttt{Word from partition}\\
\textsf{Input:} A partition $X=X_1\sqcup \ldots \sqcup X_n$, where $X_i=\sqcup_{j \in I_i} [x_{i,j},x_{i,j+1}]$, and \\
$X=\sqcup_{l=1}^{M}[z_l;z_{l+1}]$ is the associated decomposition.\\
\textsf{Output:} A word $\omega$ constructed over the alphabet $a_1,\ldots,a_n$.

\begin{enumerate}
\item If  $[z_0,z_1] \subset X_j$ then $a_1$ is associated to $X_j$ and $\alpha:=2$.
\item $\omega(1):=a_1$.
\item For $l$ from 1 to $M$ do
\begin{enumerate}
\item[] If $[z_l,z_{l+1}] \subset X_i$ and $X_i$ is associated to $a_k$ where $k<\alpha$ \\Then $\omega(l+1):=a_k$, \\
Else 
associate $a_{\alpha}$  to $X_i$, $\omega(l+1):=a_{\alpha}$, and $\alpha:=\alpha+1$.\\
\end{enumerate}
\end{enumerate}

Now, we introduce a second order on the partitions.

\begin{Def}
Consider two partitions $X=X_1\sqcup \ldots \sqcup X_n$ and $X=X'_1\sqcup \ldots \sqcup X'_n$. With the previous algorithm we associate a word $\omega$ to the first partition and we associate a word $\omega'$ to the second partition. 

If $\omega \succ_{lex} \omega'$, that is to say,  if $\omega$ is bigger than $\omega'$ with the lexicographic order with $a_n \succ_{lex} a_{n-1} \succ_{lex} \ldots \succ_{lex} a_1$,  then we say that the partition $X=X_1\sqcup \ldots \sqcup X_n$ is bigger than the partition $X=X'_1\sqcup \ldots \sqcup X'_n$ relatively to the lexicographic order.\\
If two partitions gives the same word then we say that the partitions are equal relatively to the lexicographic order.
\end{Def}

\begin{Lem}\label{lem:permut}
Consider two partitions $X=X_1\sqcup \ldots \sqcup X_n$ and $X=X'_1\sqcup \ldots \sqcup X'_n$. If these partitions give the same vector $(z_1,\ldots,z_M)$ and if these partitions are  equal relatively to the lexicographic order, then  there exists a permutation $\sigma \in \mathfrak{S}_n$ such that: 
$$X_{\sigma(i)} =X'_{i}.$$
\end{Lem}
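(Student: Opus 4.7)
The plan is to unpack what equality under the lexicographic order of words really forces on the two partitions, and then exhibit $\sigma$ as the natural relabeling map between them.

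First I would make the following observation about the algorithm \texttt{Word from partition}. Given a partition $X = X_1 \sqcup \ldots \sqcup X_n$ and its associated decomposition $X=\bigsqcup_{l=0}^{M}[z_l,z_{l+1}]$, the word $\omega$ depends only on the equivalence relation $\sim$ on the index set $\{0,\ldots,M\}$ defined by $l \sim l'$ iff the atoms $[z_l,z_{l+1}]$ and $[z_{l'},z_{l'+1}]$ lie in the same $X_i$. Indeed, the algorithm simply assigns $a_1$ to the class of $0$, and then scans $l=1,\ldots,M$, re-using the letter of the class of $l$ if that class has already been encountered, and introducing a fresh letter otherwise. The letter written at position $l+1$ is thus determined by the equivalence class of $l$ together with the history of first occurrences. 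Conversely, from the word $\omega$ one can read off the equivalence classes: positions carrying the same letter are in the same class.

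Second, I would apply this to the two partitions in the hypothesis. They share the same atomic decomposition $(z_1,\ldots,z_M)$ by assumption, and they give the same word by assumption. By the observation above, they induce the same equivalence relation on $\{0,\ldots,M\}$. Concretely, for each letter $a_k$ appearing in $\omega$, set
$$Y_k \;=\; \bigsqcup_{\omega(l+1)=a_k} [z_l,z_{l+1}].$$
Then for the first partition there is a bijection $\varphi$ from the set of letters to $\{1,\ldots,n\}$ with $X_{\varphi(k)}=Y_k$, and for the second partition there is a bijection $\varphi'$ with $X'_{\varphi'(k)}=Y_k$.

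Finally, I would define $\sigma \in \mathfrak{S}_n$ by $\sigma(i) = \varphi\bigl((\varphi')^{-1}(i)\bigr)$. This is a well-defined permutation because $\varphi$ and $\varphi'$ are bijections between the same sets, and it satisfies
$$X_{\sigma(i)} \;=\; X_{\varphi((\varphi')^{-1}(i))} \;=\; Y_{(\varphi')^{-1}(i)} \;=\; X'_i,$$
as required. The only point that needs a touch of care is the bookkeeping in step one (that the word determines, and is determined by, the equivalence relation on atoms); everything else is routine relabeling. I do not anticipate a genuine obstacle — the substance of the lemma is really the claim that the word algorithm is a complete invariant of the unordered partition.
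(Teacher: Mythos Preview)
Your argument is correct and is precisely the elaboration the paper omits: the paper's own proof is the single sentence ``This follows from the construction of the lexicographic order on the partitions,'' and you have made that construction explicit by showing the word is a complete invariant of the unordered collection $\{X_1,\ldots,X_n\}$. The only cosmetic point is that if some $X_i$ are empty then fewer than $n$ letters appear and $\varphi,\varphi'$ are bijections onto the nonempty indices only, but $\sigma$ then extends to $\mathfrak{S}_n$ by matching the (equinumerous) empty parts arbitrarily.
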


\begin{proof}
This follows from the construction of the lexicographic order on the partitions.
\end{proof}

The two previous orders allow us to get a symmetric and envy-free fair division.\\

\texttt{Symmetric and Envy-free}\\
\textsf{Inputs:} $\underline{\mu}=[\mu_1,\ldots,\mu_n]$, a deterministic envy-free cake cutting algorithm $f$.\\
\textsf{Outputs:} $X=\mathcal{F}(X,\underline{\mu},1) \sqcup \ldots \sqcup \mathcal{F}(X,\underline{\mu},n)$, where $\mathcal{F}(X,\underline{\mu},i)$ is a finite union of disjoint intervals and $\mathcal{F}(X,\underline{\mu},i)$ is given to the $i$-th player.

\begin{enumerate}
\item For all $\sigma \in \mathfrak{S}_n$, computes the partition  $f(\underline{\mu}^{\sigma})$ and\\ set $S:=\{f(\underline{\mu}^{\sigma}) \, | \, \sigma \in \mathfrak{S}_n \}$.
\item Let $S_1$ be the subset of $S$ of  all partitions with a minimal graded order.
\item  If $|S_1|=1$, then Return the unique partition in $S_1$, else go to the next step.
\item Let $S_2$ be the set of all the partitions in $S_1$ with a minimal lexicographic order.
\item \label{step:fin_envyfree} Return a partition $f(\underline{\mu}^{\sigma}) \in S_2$.
\end{enumerate}

\begin{Thm}
The  algorithm \emph{\texttt{Symmetric and Envy-free}} is deterministic symmetric and envy-free.
\end{Thm}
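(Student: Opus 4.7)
The plan is to verify three things separately: that the returned partition is envy-free, that the computation uses no random choice, and that the $\mu_i$-value received by each original player depends only on the multiset $\{\mu_1,\ldots,\mu_n\}$, not on its presentation order.

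Envy-freeness comes for free from $f$. Every element of $S$ is of the form $f(\underline{\mu}^{\sigma})$, and in this partition the player at position $k$ in $\underline{\mu}^{\sigma}$ (namely the player whose original measure is $\mu_{\sigma(k)}$) receives the piece $Y_k$; since $f$ is envy-free, $\mu_{\sigma(k)}(Y_k)\ge\mu_{\sigma(k)}(Y_\ell)$ for every $\ell$, which is exactly envy-freeness in the original indexing. Strict determinism is immediate since the algorithm enumerates $\mathfrak{S}_n$ and filters using two fully specified orders.

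The heart of the argument is symmetry. First I would observe that running on $\underline{\mu}^{\tau}$ produces the set $\{f((\underline{\mu}^{\tau})^{\sigma})\}_{\sigma\in\mathfrak{S}_n}=\{f(\underline{\mu}^{\tau\sigma})\}_{\sigma}=\{f(\underline{\mu}^{\rho})\}_{\rho}$, so $S$ is literally the same collection of labeled partitions once the labels are read in the original indexing. Next, the graded order only inspects the cut-point vector $(z_1,\ldots,z_M)$, and the word produced by \texttt{Word from partition} is built by introducing a fresh alphabet left-to-right, so it too is invariant under any relabeling of the pieces. Therefore the subsets $S_1$ and $S_2$ depend only on the multiset of input measures, not on their order.

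The last and subtlest step is to prove that every element of $S_2$ gives the player with measure $\mu_j$ a piece of the same $\mu_j$-value. By Lemma~\ref{lem:permut}, two partitions in $S_2$ share the same unlabeled pieces and differ only by a permutation of the labels $(X_1,\ldots,X_n)$; if the player $\mu_j$ receives pieces $A$ and $B$ respectively in two such partitions, envy-freeness in each forces $\mu_j(A)\ge\mu_j(B)$ and $\mu_j(B)\ge\mu_j(A)$, whence $\mu_j(A)=\mu_j(B)$. The main obstacle is exactly this point: the choice made in Step~\ref{step:fin_envyfree} can be ambiguous, and envy-freeness of $f$ (not merely proportionality) is precisely what makes that ambiguity harmless, yielding the symmetry identity $\mu_i(\mathcal F(X,\underline{\mu},i))=\mu_i(\mathcal F(X,\underline{\mu}^{\tau},\tau^{-1}(i)))$.
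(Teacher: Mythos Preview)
Your proof is correct and follows essentially the same route as the paper: the set $S$ is unchanged under reordering the input, the graded and lexicographic orders are label-blind so $S_1$ and $S_2$ depend only on the multiset of measures, and then Lemma~\ref{lem:permut} combined with envy-freeness forces all partitions in $S_2$ to assign each player the same $\mu_i$-value. You make explicit the label-invariance of the two orders, which the paper leaves implicit, but the key step (the two-sided envy-free inequality collapsing to equality) is identical.
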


\begin{proof}
This algorithm is envy free because we return a result coming from an envy-free protocol.\\

We remark that if we apply the algorithm to the list $\underline{\mu}$ or $\underline{\mu}^{\rho}$ where $ \rho \in\mathfrak{S}_n$, then the set $S$ computed in the first step will always be the same. Therefore, we just have to study the situation where $S_2$ contains several partitions.\\ Consider two distinct partitions in $S_2$, $X=X_1\sqcup \ldots \sqcup X_n$ and $X=X'_1\sqcup \ldots \sqcup X'_n$. Thanks to Lemma \ref{lem:permut}, there exists a permutation $\sigma \in \mathfrak{S}_n$ such that $X_{\sigma(i)}=X'_{i}$.\\
As $f$ is an envy-free protocol, if the $i$-th player receives the portion $X_i$ then we have \mbox{$\mu_i(X_i) \geq \mu_i(X_{\sigma(i)})$}. Therefore, $\mu_i(X_i) \geq \mu_i(X_{\sigma(i)})=\mu_i(X'_i)$. In the same way, we show that $\mu_i(X'_i) \geq \mu_i(X_i)$. This gives $\mu_i(X'_i) =\mu_i(X_i)$. Then, for all partitions in $S_2$ each player will evaluate in the same way his or her portion. Thus the algorithm is symmetric. \\
In step \ref{step:fin_envyfree} we have to choose a partition among all  partitions in $S_2$. We can choose the first computed partition appearing in $S_2$. This last step depends on the order of the measures given in input. However, as  explained before this choice does not have en effect on how the $i$-th player evaluate his or her part.
\end{proof}

The idea of the algorithm is the following: if we have different possible partitions coming from all the $f(\underline{\mu}^{\sigma})$ then we prefer the ones with the fewest number of intervals and with the smallest leftmost part. It seems natural to prefer a partition with few intervals. The second condition can be interpreted as follows: If the different pieces of cake are given from left to right, thus in  increasing order of the  $x_{i,j}$, then our algorithm gives a first piece with small length to the first served player. If we imagine that a mediator is used to cut the cake then our convention means the following: if a player cooperates quickly with the mediator (the player accepts the leftmost part of the cake) then he gets quickly a piece of cake.

\subsection{Some remarks about the complexity of symmetric and envy-free algorithm}\label{sec:complexity}

Our algorithm relies on an envy-free division algorithm and needs to compute all  fair divisions for all permutation orders. Suppose that this envy-free division algorithm has a  complexity equals to $T(n)$ in the Robertson-Webb model, then our algorithm uses $n! \times T(n)$ queries. 
Indeed, our approach needs to compute all the fair divisions for all permutation orders. A natural question is the following: Is it necessary?\\

 Recently Aziz and Mackenzie have proposed in \cite{AzizMackenzie} the first envy-free algorithm with a complexity bounded in terms of the number of players.  If we use this algorithm then we get a symmetric and envy-free algorithm with a complexity bounded in terms of the number of players. \\
 
At last, we remark that if the envy-free algorithm $f$ uses a continuous protocol (a moving knife method) then our algorithm $\mathcal{F}$ gives a continuous protocol to compute a symmetric and envy-free division.
 \subsection{Approximate symmetric and envy-free fair division algorithm}\label{sub:sec-approx}
Envy-free fair division has also been studied in an approximate setting. A division is said to be $\varepsilon$-envy-free when we have for all $i$ and $j$: $\mu_i(X_i)\geq \mu_i(X_j) -\varepsilon$, where $\varepsilon >0$.
There exists an algorithm which gives such fair division, see \cite{Branzei2017}. The complexity of this algorithm is in $O(n/\varepsilon)$ in the Robertson-Webb model.\\
In the approximate setting a new definition of symmetric fair division is required. We say that an algorithm $\mathcal{F}$ gives an  $\varepsilon$-symmetric fair division when we have for all $i$ and all permutations $\sigma \in \mathfrak{S}_n$: 
$$\Big|\mu_i\big(\mathcal{F}(X,\underline{\mu},i) \big)- \mu_i\big(\mathcal{F}(X,\underline{\mu}^{\sigma},\sigma^{-1}(i)) \big)\Big| \leq \varepsilon.$$
This means that if we modify the order of the measures in the input of the algorithm then the perturbation on the new value obtained by the $i$-th player is bounded by~$\varepsilon$.\\

With these definitions it is natural to look for an $\varepsilon$-symmetric and $\varepsilon$-envy-free fair division. In this situation we do not need to repeat $n!$ times an $\varepsilon$-envy-free algorithm. Indeed, contrary to the exact setting there exists an algorithm computing an $\varepsilon$-perfect fair division, see \cite{Branzei}. This means that there exists an algorithm $\mathcal{F}$ such that 
$$\Big|\mu_i\big(\mathcal{F}(X,\underline{\mu},i)\big)-\dfrac{1}{n}\Big|\leq \varepsilon.$$
The complexity of this algorithm is in $O(n^2/\varepsilon)$.\\
Thus the $\varepsilon$-perfect algorithm gives  an $\varepsilon$-symmetric and $\varepsilon$-envy-free fair division  without increasing the complexity of an $\varepsilon$-envy-free protocol by a factor $n!$. Unfortunately, this algorithm  has an exponential time complexity in $n$ if we take into account the number of  elementary operations (arithmetic operations and inequality tests). Indeed, in this algorithm we have to consider all subsets $Y$ with cardinal at most $n(n-1)$ in a set with cardinal $nK$ where $K=\lceil \frac{2n(n-1)}{\varepsilon}\rceil$. Therefore, the asymptotic formula $\binom{2n}{n} \approx \dfrac{4^n}{\sqrt{\pi n}}$ shows that we have to consider an exponential number of subsets.
 \section{Aristotelian, symmetric and proportional cake cutting algorithms} \label{sec:propvrac}
 
In this section we first give an aristotelian and proportional fair division algorithm and then  a symmetric and proportional one. These two algorithms are based on Kuhn's algorithm, see \cite{Kuhn}.

\subsection{An aristotelian proportional cake cutting algorithm}\label{sec:aristote}
\subsubsection{The Evan-Paz algorithm and the last diminisher procedure are not aristotelian}
Before giving our aristotelian and proportional algorithm we show that the  classical Evan-Paz algorithm and the last diminisher procedure do not give an aristotelian fair division. \\

In the Evan-Paz algorithm we can have the following situation:
We consider four players with associated measures $\mu_1$, $\mu_2$, $\mu_3$, $\mu_4$. Furthermore, we suppose that $\mu_1=\mu_4$ is the Lebesgue measure on $[0,1]$. We also suppose that $\mu_2([0,0.5])=\mu_3([0,0.5])=1/2$ and $\mu_3([0.5,0.51])=1/4$.\\
 In the first step of the Evan-Paz algorithm we ask each player to cut the cake in two equal parts. More precisely, we ask $cut_i(0,1/2)$. In our situation, each player give the same point: $y=0.5$. In the second step, the algorithm consider two sets of two players. The first part of the cake $[0,0.5]$ will be given to the first set of players and the second part $[0.5,1]$ will be given to the second set of players. Usually, when all players give the same answers the two sets are constructed randomly or in function of the order of the players. Thus we can suppose that in the second step we give $[0,0.5]$ to $\mu_1$ and $\mu_2$ and $[0.5,1]$ to $\mu_3$ and $\mu_4$. At last, the ``Cut and Choose" algorithm is used to share $[0,0.5]$ (respectively $[0.5,1]$)  between the two players $\mu_1$, $\mu_2$ (respectively $\mu_3$, $\mu_4$). Thus $\mu_1$ cut the interval  $[0,0.5]$ and get $X_1$ such that $\mu_1(X_1)=1/4$, and $\mu_3$ cuts the interval $[0.5,1]$ and get $X_3=[0.5,0.51]$. Thus $X_4=[0.51,1]$ and $0.49=\mu_4(X_4)> \mu_1(X_1)=0.25$. As $\mu_1=\mu_4$, we deduce that the division is not aristotelian.\\

In the last diminisher procedure we can have the following situation: \\
We suppose that $\mu_1=\mu_2$ is the Lebesgue measure on $[0,1]$. Furthermore, we consider a measure $\mu_3$ such that $\mu_3([0,0.4])=1/3$, and  $\mu_3([1/3,0.5])=1/3$.\\
 In the first step of the last diminisher procedure we ask each player the query $cut_i(0,1/3)$. The first and second player give $\mu_1([0,1/3])=\mu_2([0,1/3])=1/3$ and the third player gives $\mu_3([0,0.4])=1/3$. In the first step of this algorithm we  give the portion $[0,1/3]$ to the first or to the second player. Suppose that we give this portion to the first player. In the second step of the last diminisher algorithm we ask $cut_2(1/3,1/3)$ and $cut_3(1/3,1/3)$. We  get thus the following information $\mu_2([1/3, 2/3])=1/3$ and $\mu_3([1/3, 0.5])=1/3$. After the second step  the algorithm  gives $[1/3,0.5]$ to the third player. It follows that the second player get $[0.5,1]$ and $\mu_2([0.5,1])=0.5>1/3=\mu_1([0,1/3])$. Therefore, this is not an aristotelian division since $\mu_1=\mu_2$. 

\subsubsection{An aristotelian proportional fair division algorithm}
 In this subsection, we recall Kuhn's fair division algorithm, see \cite{Kuhn}, and then we show how to modify it to get an aristotelian fair division algorithm. In order to state this algorithm we introduce the following definition:

\begin{Def}
Let $X=\sqcup_j A_j$ be a partition of $X$. An allocation relatively to this partition is a set $\{ (\mu_{i_1},A_{j_1}), \ldots, (\mu_{i_l},A_{j_l})\}$  such that for $k=1, \ldots , l$:  
$$\mu_{i_k}(A_{j_k})\geq \dfrac{\mu_{i_k}(X)}{n} \textrm{ and } \mu_{i}(A_{j_k})<\dfrac{\mu_i(X)}{n} \textrm{ if } i \neq i_1,\ldots, i_l.$$
A maximal allocation is an allocation whose cardinal is maximal.\\
In the following we say that a piece of cake $A_k$ is acceptable for the $i$-th player if $\mu_i(A_k)\geq \mu_i(X)/n$.
\end{Def}
In the previous definition the part $A_i$ is not necessarily given to the $i$-th player. The measurable sets $A_i$ do not play the same role than $X_i$ in the previous section. The partition $X=\sqcup_i A_i$ is just a partition of $X$, it is not necessarily the final result of a proportional fair division problem.\\

\begin{Lem}
For a given partition there always exists a maximal allocation.
\end{Lem}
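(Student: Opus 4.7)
The plan is very short because the lemma is really a finiteness argument. First I would observe that the empty set qualifies as an allocation in the sense of the preceding definition: both conditions are indexed by $k \in \{1,\ldots,l\}$, and when $l=0$ this index set is empty, so the two requirements ($\mu_{i_k}(A_{j_k}) \geq \mu_{i_k}(X)/n$ and $\mu_i(A_{j_k}) < \mu_i(X)/n$ for $i$ outside the list) are satisfied vacuously. In particular, the set of allocations relative to the fixed partition $X = \sqcup_j A_j$ is non-empty.

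Next I would bound the cardinality of any allocation. Each allocation is a subset of the finite set $\{1,\ldots,n\} \times \{A_j\}_j$ of player-piece pairs (the partition is finite since it is given as such by the definition of a partition of $X$ in the paper, and $n$ is finite). Hence the collection of all allocations is a finite family of finite sets. Any finite non-empty family of finite sets admits an element of maximum cardinality, which by definition is a maximal allocation.

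The only point that requires a moment's attention is the vacuity of the second condition for $l=0$; everything else is set-theoretic finiteness. I therefore do not expect any serious obstacle: the lemma is stated in order to legitimize speaking of \emph{a} maximal allocation in what follows, and the real work will come in the subsequent algorithm that uses this object.
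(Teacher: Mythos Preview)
Your argument is correct for the lemma as literally stated, but the paper takes a different route. The paper invokes a result of Kuhn, proved via the Frobenius--K\"onig theorem, asserting that for any partition there exists an allocation; once one knows the set of allocations is non-empty, finiteness yields a maximal element. You bypass Kuhn entirely by observing that the empty set vacuously satisfies the definition of an allocation, so the set of allocations is non-empty for free.

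The difference matters downstream. The termination proof for \texttt{AristoProp} asserts that ``the first player always gets a part of the cake,'' which requires the maximal allocation computed in Step~3 to be non-empty. Your argument does not rule out the possibility that the empty set is itself the maximal allocation; the paper's route through Kuhn does, because Kuhn produces a genuine (non-empty) allocation. So while your proof establishes the lemma, the paper's proof implicitly delivers the stronger fact that the algorithm actually relies on. You correctly anticipated in your final sentence that the ``real work'' lies elsewhere --- but be aware that if one admits the empty allocation, the burden of showing non-emptiness of the maximal allocation shifts to the termination argument, whereas the paper discharges it here by citing Kuhn.
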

\begin{proof}
With the  Frobenius-K\"onig theorem, Kuhn has shown in \cite{Kuhn} that there always exists an allocation relatively to a given partition.  This gives the existence of maximal allocations.
\end{proof}

Kuhn's algorithm proceeds as follows: The first player cuts the cake in $n$ parts with value $1/n=\mu_1(X)/n$ for his or her own measure. This gives a partition $X=\sqcup_i A_i$. Then we compute a maximal allocation relatively to this partition. Each player in the maximal allocation receives his or her associated portion. The remaining part of the cake is then divided between the rest of the players with the same method.\\

Now, we can describe our aristotelian algorithm. The idea is the following:\\
As before the first player cut the cake in $n$ parts with value $1/n$ for his or her own measure. This gives a partition $X=\sqcup_j A_j$ and we compute a maximal allocation relatively to this partition. Then each player $i_k$ in the maximal allocation receives his or her associated part if $\mu_{i_k}(A_{j})=1/n$ for all $A_j$ in the maximal allocation. In particular, all players with the same measure than the first player receive the same value. Then it remains two subcakes $X_1$ and $X_2$. We associate respectively these two subcakes to two set of players $\mathcal{E}_1$ and $\mathcal{E}_2$.\\

First, the set $\mathcal{E}_1$ corresponds to the set of players with an index $i_k$ in the maximal allocation such that there exists $A_j$ in the maximal allocation with \mbox{$\mu_{i_k}(A_j) \neq \mu_1(A_1)$}. Thus a player in $\mathcal{E}_1$ does not evaluate all portions $A_j$ as $\mu_1$. Then, we consider the set $\mathcal{L}_1$ constructed in the following way: $j_k \in \mathcal{L}_1$ if and only if $i_k \in \mathcal{E}_1$. At last, we set $X_1=\sqcup_{j \in \mathcal{L}_1}A_j$.
 
  Then we put together the players in $\mathcal{E}_1$  which seem to have the same measure. More precisely, we consider  a partition of $\mathcal{E}_1=\sqcup_{m=1}^d \mathcal{E}_{1,m}$ and $\mathcal{L}_1=\sqcup_{m=1}^d \mathcal{L}_{1,m}$ such that:
$$
(\star) \quad \begin{cases}
\forall i,i' \in \mathcal{E}_{1,m},\, \forall j, \quad \mu_i(A_j)=\mu_{i'}(A_{j}),\\
\mathcal{L}_{1,m}=\{j_k \, | \, i_k \in \mathcal{E}_{1,m} \}.\\

\end{cases}$$
This means that for all $i \in \mathcal{E}_{1,m}$, there exists  a constant $c_{j,m}$ (independent of $i$) such that for all $j$ we have $\mu_i(A_j)=c_{j,m}$.\\
In particular, as $\mu_{i_k}(A_{j_k}) \geq \mu_{i_k}(X)/n$, we have the following
\begin{Rem}\label{rem:tech}
For all $i \in \mathcal{E}_{1,m}$ and $j \in \mathcal{L}_{1,m}$ we have $\mu_i(A_j)\geq \mu_i(X)/n$.
\end{Rem}

Then we consider $X_{1,m}=\sqcup_{j \in \mathcal{L}_{1,m}}A_j$ and we associate to these subcakes the players with indices in $\mathcal{E}_{1,m}$. Therefore, it will be possible to share $X_{1,m}$ between the players with indices in $\mathcal{E}_{1,m}$ because by construction they evaluate all $A_j$ in the same way with a value bigger than $1/n$.\\

At last, we  denote by $X_2$ the part of the cake not appearing in the maximal allocation. Then we can share $X_2$ between the players not appearing in the maximal allocation since by definition they do not find acceptable the portions  in the maximal allocation.\\
The  algorithm will call recursively the algorithm on $X_{1,m}$ and $X_2$.\\

In the following we will  use queries for a ``subcake" $\mathcal{X} \subsetneq [0,1]$. Indeed, as in Kuhn's algorithm we are going to consider situations where the cake will be of the form $[0,1] \setminus Y$, where $Y$ will correspond to the part of the cake already given by the algorithm. We need thus the following notations:
\begin{enumerate}
\item $eval_i^{\mathcal{X}}(x,y)$: Ask agent $i$ to evaluate $[x,y]\cap \mathcal{X}$.\\ This means return $\mu_i([x,y]\cap \mathcal{X})$.
\item $cut_i^{\mathcal{X}}(x,a)$: Ask agent $i$ to give $y$ such that $\mu_i([x,y]\cap \mathcal{X})=a$.
\end{enumerate}
We will see that these queries do not introduce new operations. More precisely, during the algorithm these queries  $eval_i^{\mathcal{X}}(x,y)$ and $cut_i^{\mathcal{X}}(x,a)$ can be compute thanks to  $eval_i(x,y)$ and $cut_i(x,a)$.\\

\texttt{AristoProp}\\
\textsf{Inputs:} $\underline{\mu}=[\mu_1,\ldots,\mu_{n}]$, $\mathcal{X} \subset [0;1]$.\\
\textsf{Outputs:} $\mathcal{X}=\mathcal{F}(\mathcal{X},\underline{\mu},1) \sqcup \ldots \sqcup \mathcal{F}(\mathcal{X},\underline{\mu},n)$, where $\mathcal{F}(\mathcal{X},\underline{\mu},i)$ is a finite union of disjoint intervals and $\mathcal{F}(\mathcal{X},\underline{\mu},i)$ is given to the $i$-th player.

\begin{enumerate}
\item \label{step1:aristo}\%\emph{Ask the first player to cut the cake in $n$ parts with values $\mu_1(\mathcal{X})/n$. }\%\\
\%\emph{This gives: $\mathcal{X}=\sqcup_i A_i$.}\% \\
$x_0:=\min_{x \in \mathcal{X}}(x)$\\
For $j$ from 1 to $n$ do\\
\hphantom{bla} $x_j:=cut_1^{\mathcal{X}}\big(x_{j-1},\mu_1(\mathcal{X})/n\big),$\\
\hphantom{bla} Set $A_j:=[x_{j-1};x_{j}]\cap \mathcal{X}$.\\

\item \label{step2:aristo} \% \emph{Ask each player to evaluate each $A_j$.}\%\\
For $i$ from 2 to $n$ do  \\
\hphantom{bla} For $j$ from $1$ to $n$ do\\
\hphantom{blabla} $eval_i^{\mathcal{X}}(x_{j-1},x_j)$.\\

\item \label{step3:aristo} Compute a maximal allocation $\mathcal{A}:=\{ (\mu_{i_1},A_{j_1}), \ldots, (\mu_{i_l},A_{j_l})\}$  relatively to the partition $\mathcal{X}=\sqcup_i A_i$.\\

\item \label{step4:aristo}\% \emph{If for  all $j$ in $\{j_1, \ldots, j_l\}$, we have $\mu_{i_k}(A_{j})=\mu_1(A_1)$ then give the portion $A_{j_k}$ to the player with associated measure $\mu_{i_k}$.}\%\\
Set $\mathcal{E}:=\emptyset$, $\mathcal{E}_1:=\emptyset$, $\mathcal{L}_1:=\emptyset$, $\mathcal{X}_1:=\emptyset$.\\
For $i_k$ in $\{ i_1, \ldots, i_l\}$ do \\
\hphantom{bla}  t:=true;\\
\hphantom{bla} For $j$ in $\{ j_1, \ldots, j_l\}$ do\\
\hphantom{blabla} If $\mu_{i_k}(A_{j}) \neq \mu_1(A_1)$ Then t:=false.\\
\hphantom{bla} If t=true Then $\mathcal{F}(X,\underline{\mu},i_k):=A_{j_k}$, $\mathcal{E}:=\mathcal{E} \cup \{i_k\}$,\\
\hphantom{bla If t=true} Else $\mathcal{E}_1:=\mathcal{E}_1 \cup \{i_k\}$, $\mathcal{L}_1:=\mathcal{L}_1 \cup \{j_k \}$, $\mathcal{X}_1:=\mathcal{X}_1 \cup A_{j_k}$.\\

\item \label{step5:aristo} 
Construct a partition $\mathcal{E}_1=\sqcup_{m=1}^d \mathcal{E}_{1,m}$ and a partition $\mathcal{L}_1:=\sqcup_{m=1}^d \mathcal{L}_{1,m}$ sastisfying $(\star)$.\\

\noindent Set $\underline{\mu}_{1,m}$ as the list of measures associated to players with index  in $\mathcal{E}_{1,m}$. \\
Set $\mathcal{X}_{1,m}:=\sqcup_{j \in \mathcal{L}_{1,m}} A_{j}$.\\
Set $\mathcal{E}_2:=\{1,\ldots,n\} \setminus \{i_1, \ldots,i_l\}$, $\mathcal{X}_2:=\mathcal{X} \setminus \big(\sqcup_{k=1}^l A_{j_k}\big)$.\\
Set $\underline{\mu}_2$ as the list of measures associated to players with index  in $\mathcal{E}_2$. \\

\item \label{step6:aristo} Return\big($\sqcup_{i \in \mathcal{E}} \mathcal{F}(\mathcal{X},\underline{\mu},i) \sqcup_{m=1}^d$ \texttt{AristoProp} $(\underline{\mu}_{1,m},\mathcal{X}_{1,m}) \sqcup$ \texttt{AristoProp} $(\underline{\mu}_2,\mathcal{X}_2)\big)$.

\end{enumerate}

\begin{Prop}\label{prop:aristo_prop}
The algorithm \texttt{AristoProp} applied to $\underline{\mu}=[\mu_1,\ldots,\mu_n]$ and  \mbox{$\mathcal{X}=[0,1]$} terminates and  is aristotelian. 
\end{Prop}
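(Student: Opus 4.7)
The plan is to prove termination and the aristotelian property simultaneously by strong induction on $n$, the length of $\underline{\mu}$. The base case $n=1$ is trivial: the single player receives all of $\mathcal{X}$ at Step~\ref{step6:aristo} after a degenerate execution.

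For termination, the crucial observation is that player~$1$ is always placed in $\mathcal{E}$ at Step~\ref{step4:aristo}. Indeed, after Step~\ref{step1:aristo} we have $\mu_1(A_j)=\mu_1(\mathcal{X})/n$ for every $j$, so player~$1$ finds every piece acceptable. Consequently any allocation must include player~$1$: otherwise the defining inequality $\mu_1(A_{j_k})<\mu_1(\mathcal{X})/n$ would be violated, and a fortiori this holds for the maximal allocation computed in Step~\ref{step3:aristo}. Moreover, since player~$1$ evaluates every $A_{j_k}$ as $\mu_1(A_1)$, the Boolean test in Step~\ref{step4:aristo} remains true for player~$1$, so player~$1$ is absorbed into $\mathcal{E}$. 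Therefore $1\notin\mathcal{E}_1\cup\mathcal{E}_2$, whence $|\mathcal{E}_{1,m}|\le n-1$ and $|\mathcal{E}_2|\le n-1$, and each recursive call strictly decreases the number of players. The induction hypothesis then yields termination.

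For the aristotelian property, suppose $\mu_i=\mu_j$. I first argue that $i$ and $j$ cannot be separated by the maximal allocation: if $i$ is assigned $A_{j_k}$, then $\mu_i(A_{j_k})\ge\mu_i(\mathcal{X})/n$, hence $\mu_j(A_{j_k})=\mu_i(A_{j_k})\ge\mu_j(\mathcal{X})/n$, so the allocation axiom forbids $j$ from lying outside the allocation. Three cases remain. If $i,j\in\mathcal{E}$, the Step~\ref{step4:aristo} test forces each of their received pieces to have $\mu_i$- and $\mu_j$-measure exactly $\mu_1(A_1)=\mu_1(\mathcal{X})/n$, so $\mu_i(X_i)=\mu_j(X_j)$. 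If $i,j\in\mathcal{E}_1$, I select in Step~\ref{step5:aristo} the canonical (coarsest) partition satisfying $(\star)$, namely the equivalence classes of the relation $i\sim i'\iff\mu_i(A_j)=\mu_{i'}(A_j)$ for every $j$; then $\mu_i=\mu_j$ trivially implies $i\sim j$, so $i$ and $j$ enter the same recursive call on $(\underline{\mu}_{1,m},\mathcal{X}_{1,m})$ with equal measures, and the induction hypothesis applies. If $i,j\in\mathcal{E}_2$, the analogous argument on the single recursive call on $(\underline{\mu}_2,\mathcal{X}_2)$ closes the case.

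The main subtlety is precisely the non-uniqueness of the partition in Step~\ref{step5:aristo}: an arbitrary choice of $(\star)$-partition could split two players with identical measures into distinct classes, breaking the inductive step for $i,j\in\mathcal{E}_1$. Fixing the canonical coarsest partition once and for all removes this obstacle, and the two inductive case analyses above then conclude the proof.
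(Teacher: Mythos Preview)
Your proof is correct and follows essentially the same route as the paper: strong induction on the number of players, the observation that player~$1$ always lands in $\mathcal{E}$ (giving termination), the lemma that two players with equal measures are either both in or both out of the maximal allocation, and the same three-case analysis ($\mathcal{E}$, $\mathcal{E}_1$, $\mathcal{E}_2$). Your explicit insistence on the \emph{coarsest} partition in Step~\ref{step5:aristo} is a welcome clarification: the paper's proof tacitly assumes that $\mu_p=\mu_q$ forces $p$ and $q$ into the same $\mathcal{E}_{1,m}$, which is only guaranteed under this canonical choice (consistent with the paper's informal description ``put together the players which seem to have the same measure'').
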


\begin{proof}
The algorithm terminates since after one call of the algorithm the number of player decreases strictly since the first player always get a part of the cake.\\

Now, we are going to prove by induction that the algorithm is aristotelian.\\
We consider the following claim:\\
$(H_n)$: The algorithm \texttt{AristoProp} applied with $n$ measures is aristotelian.\\

For $n=2$, $H_2$ is true. Indeed, if $\mu_1=\mu_2$ then  $\mu_2$ belongs to the maximal allocation computed in Step~\ref{step3:aristo}. Furthermore, we can suppose without loss of generality that the maximal allocation has the following form $\mathcal{A}=\{(\mu_1, A_1),(\mu_2,A_2)\}$.\\
By construction we have $\mu_1(A_1)=\mu_1(A_2)$. Thus $\mu_1(A_1)=\mu_2(A_1)=\mu_2(A_2)$ since $\mu_1=\mu_2$. As in Step~\ref{step4:aristo}, $\mu_1$ gets the portion $A_1$ and $\mu_2$ gets the portion $A_2$, we deduce that $H_2$ is true.\\

Now, we suppose that $H_k$ is true when $k \leq n$ and we are going to prove that $H_{n+1}$ is true.\\

We suppose that we have $n+1$ measures $\mu_1, \ldots, \mu_{n+1}$, and that $\mu_{p}=\mu_{q}$, where $p,q \in\{1,\ldots,n+1\}$.\\

First, we remark that if $(\mu_{p},A_{j_p})$ belongs to a maximal allocation then $\mu_{q}$ also belongs to the same maximal allocation. Indeed, if $\mu_q$ does not belong to the maximal allocation then $\mu_q(A_{j_p}) <\mu_q(\mathcal{X})/n$ but $\mu_q(A_{j_p})=\mu_p(A_{j_p})\geq \mu_p(\mathcal{X})/n$ because $\mu_p=\mu_q$ and $(\mu_p,A_{j_p})$ belongs to the maximal allocation. This gives the desired contradiction and proves our remark.\\

Now two situations appear: In Step~\ref{step3:aristo}, $\mu_p$ and $\mu_q$ belongs to the maximal allocation or they do not belong to it.\\

If $\mu_p$ and $\mu_q$ do not belong to the maximal allocation then $\mu_p$ and $\mu_q$ belong to the list  $\underline{\mu}_2$. Then, $\mu_p$ and $\mu_q$ get their portions when, in Step~\ref{step6:aristo}, we apply \texttt{AristoProp} $(\underline{\mu}_2,\mathcal{X}_2)$. As the list $\underline{\mu}_2$ have at most $n$ measures and $H_n$ is true we deduce that $\mu_p$ and $\mu_q$ get the same value and then the algorithm is aristotelian in this case.\\

If $\mu_p$ and $\mu_q$ belong to the maximal allocation then we have two cases:\\
there exists an index $j_0$ in $\{j_1, \ldots, j_l\}$ such that $\mu_p(A_{j_0})\neq \mu_1(A_1)$ or\\
for all $j_k$ in  $\{j_1, \ldots, j_l\}$ we have $\mu_p(A_{j_k})=\mu_1(A_1)$.\\

In the first case, as $\mu_p=\mu_q$ then we also have $\mu_q(A_{j_0})\neq \mu_1(A_1)$. Then, $p$ and $q$ belong to  $\mathcal{E}_1$. Furthermore, as $\mu_p=\mu_q$, for all $j$ we have $\mu_p(A_j)=\mu_q(A_j)$. Then $\mu_p$ and $\mu_q$ belong to the same list $\underline{\mu}_{1,m}$. As the list $\underline{\mu}_{1,m}$ have at most $n$ measures and $H_n$ is true we deduce that $\mu_p$ and $\mu_q$ get the same value and then the algorithm is also aristotelian in this case.\\

In the second case,  we have $p,q \in \mathcal{E}$ and $\mu_p(A_{j_k})=\mu_q(A_{j_k})=\mu_1(A_1)$ for all $j_k$ in  $\{j_1, \ldots, j_l\}$. Thus the $p$-th and $q$-th player  evaluate in the same way the portion they get. Thus, the algorithm is also aristotelian in this case and this concludes the proof.
\end{proof}

\begin{Prop}\label{prop:aristo_prop2}
The algorithm \texttt{AristoProp} applied to $\underline{\mu}=[\mu_1,\ldots,\mu_n]$  gives a proportional fair division of $[0,1]$.
\end{Prop}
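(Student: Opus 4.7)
My plan is to prove proportionality by strong induction on the number of players. The invariant I want to maintain along the recursion is the following: whenever \texttt{AristoProp} is called with a list $\underline{\mu}' = [\mu_{i_1}, \ldots, \mu_{i_{n'}}]$ of $n'$ measures (these are still the original measures, with $\mu_i([0,1]) = 1$) and a subcake $\mathcal{X}'$, then for every $i$ appearing in the list,
\[
\frac{\mu_i(\mathcal{X}')}{n'} \geq \frac{1}{n}.
\]
For the initial call, $\mathcal{X}' = [0,1]$, $n' = n$, and $\mu_i([0,1]) = 1$, so the invariant holds trivially.

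Assuming the invariant, I first deal with the players in $\mathcal{E}$: since $(\mu_{i_k}, A_{j_k})$ belongs to the maximal allocation, the definition gives $\mu_{i_k}(A_{j_k}) \geq \mu_{i_k}(\mathcal{X}')/n' \geq 1/n$, so these players receive a proportional share. Then I verify the invariant is preserved by each of the two kinds of recursive calls. For the call \texttt{AristoProp}$(\underline{\mu}_{1,m}, \mathcal{X}_{1,m})$: by Remark~\ref{rem:tech}, for every $i \in \mathcal{E}_{1,m}$ and every $j \in \mathcal{L}_{1,m}$ we have $\mu_i(A_j) \geq \mu_i(\mathcal{X}')/n'$, so summing over the $|\mathcal{L}_{1,m}| = |\mathcal{E}_{1,m}|$ pieces,
\[
\frac{\mu_i(\mathcal{X}_{1,m})}{|\mathcal{E}_{1,m}|} \geq \frac{\mu_i(\mathcal{X}')}{n'} \geq \frac{1}{n},
\]
as required.

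The main step — and what I expect to be the real crux — is verifying the invariant for the call on $(\underline{\mu}_2, \mathcal{X}_2)$. Here I use the defining property of a maximal allocation: for every $i \in \mathcal{E}_2$ and every piece $A_{j_k}$ in the allocation, $\mu_i(A_{j_k}) < \mu_i(\mathcal{X}')/n'$. Summing,
\[
\mu_i(\mathcal{X}') = \mu_i(\mathcal{X}_2) + \sum_{k=1}^{l} \mu_i(A_{j_k}) < \mu_i(\mathcal{X}_2) + \frac{l \, \mu_i(\mathcal{X}')}{n'},
\]
so $\mu_i(\mathcal{X}_2) > \mu_i(\mathcal{X}') (n' - l)/n'$, and since $|\mathcal{E}_2| = n' - l$,
\[
\frac{\mu_i(\mathcal{X}_2)}{|\mathcal{E}_2|} > \frac{\mu_i(\mathcal{X}')}{n'} \geq \frac{1}{n},
\]
preserving the invariant.

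Finally, the recursion terminates (as noted in Proposition~\ref{prop:aristo_prop}, the first player in each call is always served, so the list strictly shortens), and in the base case $n' = 1$ the sole remaining player receives the entire current subcake whose measure is at least $1/n$ by the invariant. Combining the three cases — players served in $\mathcal{E}$, players handled by the $\mathcal{X}_{1,m}$ recursions, and players handled by the $\mathcal{X}_2$ recursion — every player ultimately receives a portion of measure at least $1/n$, which is the desired proportionality.
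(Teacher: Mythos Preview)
Your proof is correct and follows essentially the same approach as the paper's own proof. Both arguments proceed by induction along the recursion, use the defining inequality of an allocation for the players in $\mathcal{E}$, invoke Remark~\ref{rem:tech} for the subcakes $\mathcal{X}_{1,m}$, and use the maximal-allocation property (each outsider values every allocated piece below $\mu_i(\mathcal{X})/n$) for $\mathcal{X}_2$; the only cosmetic difference is that you package the induction hypothesis as the explicit invariant $\mu_i(\mathcal{X}')/n' \geq 1/n$, whereas the paper states it as ``each player receives at least $\mu_i(\mathcal{X})/n$ of the current cake'' and then chains these inequalities---the underlying estimates $(\sharp)$ and $(\sharp\sharp)$ are identical.
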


\begin{proof}
We are going to prove this result by induction. We consider the following claim:\\
$(H_n)$: The algorithm \texttt{AristoProp} applied to $n$ measures  gives a proportional fair division of $[0,1]$.\\

For $n=1$, $H_1$ is true.

Now, we suppose that $H_k$ is true for $k \leq n$ and we are going to prove that $H_{n+1}$ is true.\\

We consider $n+1$ measures $\mu_1$, \ldots, $\mu_{n+1}$.\\
If $i \in \mathcal{E}$ then the $i$-th player receive a portion $\mathcal{F}(X,\underline{\mu}, i)$ that he or she consider to have a value equal to $\mu_1(A_1)$. As $\mu_1(A_1)=\mu_1(X)/n$, we get $\mu_i\big( \mathcal{F}(X,\underline{\mu}, i) \big) \geq \mu_1(X)/n$. Thus, the algorithm is proportional in this case.\\

If $i \not \in \mathcal{E}$ then we have the following situation: $\mu_i$ belongs to a list $\underline{\mu}_{1,m}$ or to the list $\underline{\mu}_2$.\\

If $\mu_i$ belongs to the list $\underline{\mu}_2$, then $\mu_i$ does belong to the maximal allocation considered and for all $j_k \in \{j_1, \ldots,j_l\}$ we have $\mu_i(A_{j_k})<\mu_1(A_1)=\mu_i(X)/n$. Therefore, we have
$$(\sharp) \quad \mu_i(\mathcal{X}_2)=\mu_i(X)-\sum_{k=1}^l \mu_i(A_{j_k})\geq \mu_i(X)-\dfrac{l\mu_i(X)}{n}=\dfrac{(n-l) \mu_i(X)}{n},$$
where $l$ is the size of the maximal allocation computed in Step~\ref{step3:aristo}.\\

Thanks to our induction hypothesis, in Step~\ref{step6:aristo} the $i$-th player receives at least \mbox{$\mu_i(\mathcal{X}_2)/(n-l)$} for his or her own measure. Thus, by $(\sharp)$ the $i$-th player gets at least $\mu_i(X)/n$. Thus, in this case, the algorithm is proportional. \\

If $\mu_i$ belongs to a list $\underline{\mu}_{1,m}$, by Remark~\ref{rem:tech}, we have $\mu_i(A_j) \geq \mu_i(X)/n$, for all $j \in \mathcal{L}_{1,m}$. Thus
$$(\sharp \sharp) \quad \mu_i(\mathcal{X}_{1,m})=\mu_i(\sqcup_{j \in\mathcal{L}_{1,m}} A_j) \geq \dfrac{|\mathcal{L}_{1,m}|\mu_i(X)}{n},$$
where $|\mathcal{L}_{1,m}|$ is the number of elements in $\mathcal{L}_{1,m}$ and this number is equal to $\mathcal{E}_{1,m}$ the number of measures in the list $\underline{\mu}_{1,m}$.\\

Thanks to our induction hypothesis, in Step~\ref{step6:aristo} the $i$-th player receives at least $\mu_i(\mathcal{X}_{1,m})/|\mathcal{E}_{1,m}|$ for his or her own measure. Thus, by $(\sharp \sharp)$ the $i$-th player gets at least $\mu_i(X)/n$ and this concludes the proof.
\end{proof}

\begin{Prop}\label{prop:complexityaristot}
The algorithm \texttt{AristoProp} applied to $\underline{\mu}=[\mu_1,\ldots,\mu_n]$, and $X=[0,1]$ uses at most $\bigO(n^3)$ queries in the Robertson-Webb model.
\end{Prop}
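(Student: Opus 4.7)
The plan is to bound $T(n)$, the worst-case number of Robertson--Webb queries used by \texttt{AristoProp} on $n$ measures, by accounting separately for the cost of a single invocation and the depth of the recursion. The per-call cost is read off from the pseudocode; the depth is controlled by one structural observation about $\mathcal{E}$.

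First I would count the queries issued by a single invocation on $n'$ players, ignoring its recursive calls. Step~\ref{step1:aristo} uses exactly $n'$ cut queries (the first player slices $\mathcal{X}$ into $n'$ equal pieces), Step~\ref{step2:aristo} uses $n'(n'-1)$ eval queries (each of the remaining $n'-1$ players evaluates each of the $n'$ pieces $A_j$), and Steps~\ref{step3:aristo}--\ref{step5:aristo} only manipulate the values $\mu_i(A_j)$ already collected and issue no new queries. So one invocation uses at most $c(n')^2$ queries in its own body for some absolute constant $c$.

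Next I would analyse the recursion triggered at Step~\ref{step6:aristo}. The key structural observation is that the first player always belongs to~$\mathcal{E}$: by Step~\ref{step1:aristo} we have $\mu_1(A_j)=\mu_1(\mathcal{X})/n'$ for every $j$, so $(\mu_1,A_1)$ lies in every maximal allocation and the test in Step~\ref{step4:aristo} is passed trivially for $i_k=1$. Hence $|\mathcal{E}|\ge 1$, and the recursive calls $\texttt{AristoProp}(\underline{\mu}_{1,m},\mathcal{X}_{1,m})$ and $\texttt{AristoProp}(\underline{\mu}_2,\mathcal{X}_2)$ run on pairwise disjoint lists of measures whose sizes sum to at most $n'-1$. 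Iterating this down the recursion tree, at any depth $k$ the active calls have player-set sizes $m_1,\ldots,m_r$ with $\sum_i m_i\le n$, and the depth itself is at most $n$ because every branch loses at least one player per step.

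Putting the two ingredients together, the total number of queries made at depth $k$ is at most $\sum_i c m_i^2\le c(\sum_i m_i)^2\le cn^2$, and summing over the at most $n$ depths gives $T(n)\le cn^3=\mathcal{O}(n^3)$. The main point that requires care is the status of the subcake queries $eval_i^{\mathcal{X}}$ and $cut_i^{\mathcal{X}}$ issued in the recursive calls, since $\mathcal{X}$ is a finite union of intervals rather than an interval of $[0,1]$; one has to justify that each such query can be simulated by a bounded number of ordinary Robertson--Webb queries on $[0,1]$, using the fact that the endpoints of $\mathcal{X}$ were produced by earlier cut queries of the algorithm. This is the amortization the authors allude to when they write that the subcake queries \emph{do not introduce new operations}, and it is the only delicate step in making the $\mathcal{O}(n^3)$ bound fully rigorous.
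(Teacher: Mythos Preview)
Your high-level accounting is sound and matches the paper's: a single call on $n'$ players costs $\mathcal{O}((n')^2)$ queries, at least one player is removed at every node (your observation that $1\in\mathcal{E}$ is correct, though strictly speaking it is $\mu_1$ paired with \emph{some} $A_j$, not necessarily $A_1$, that lies in the allocation), and the recursion has depth at most $n$ with disjoint player sets at each level. Your level-by-level bound $\sum_i m_i^2\le(\sum_i m_i)^2\le n^2$ is a clean way to handle the branching; the paper instead argues that the linear chain (one player removed per call) is the worst case and sums $\sum_{\eta=1}^{n}\eta(\eta+1)$ directly. Both reach $\mathcal{O}(n^3)$.

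The one substantive gap is precisely the point you flag but do not resolve: the subcake queries $eval_i^{\mathcal{X}}$ and $cut_i^{\mathcal{X}}$. This is not merely a technicality one can wave at --- it is where most of the content of the paper's proof sits. The paper shows concretely that, because every endpoint of $\mathcal{X}=\sqcup_j[s_j,t_j]$ arose as a cutpoint in an earlier call and every $\mu_i([s_j,t_j])$, $\mu_i([t_j,s_{j+1}])$ was already evaluated in an earlier Step~\ref{step2:aristo}, each $eval_i^{\mathcal{X}}(x,y)$ costs exactly one new standard query $eval_i(x,y)$ (the correction terms are all cached), and each $cut_i^{\mathcal{X}}(x,a)$ costs exactly two new standard queries (one $eval_i$ to locate the target subinterval, then one $cut_i$). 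Without this bookkeeping the per-call cost could in principle blow up with the number of pieces of $\mathcal{X}$, so to make your argument complete you need to carry it out rather than defer to the authors' remark.
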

\begin{proof}
In Step \ref{step1:aristo} we use $n$  $cut_i$ queries. In Step \ref{step2:aristo} we use $n(n-1)$,  $eval_i$ queries.\\
During the first call of the algorithm we use the $cut_i$ and $eval_i$ queries. In the next calls the algorithm uses the $cut_i^{\mathcal{X}}$ and $eval_i^{\mathcal{X}}$ queries where $\mathcal{X}=\sqcup_j A_j$  and the measures of $A_j$ are known by the players thanks to Step \ref{step2:aristo} of the algorithm. We remark that the situation $\mathcal{X}_2=\mathcal{X}\setminus \Big( \sqcup_{k=1}^l A_{j_k} \Big)$ of Step \ref{step5:aristo} corresponds to $\mathcal{X}_2=\sqcup_{j \not \in \mathcal{L}\sqcup \mathcal{L}_1} A_j$. 
It follows that we can write $\mathcal{X}$ in the following form: $\mathcal{X}=\sqcup_{j=1}^k [s_j;t_j]$, where $s_1<t_1<s_2<t_2<\cdots<s_k<t_k$ and the measures of $[s_j;t_j]$ and $[t_j;s_{j+1}]$ are known thanks to the previous calls of the algorithm.\\
Now, we define a function $\mathsf{f}$ in order to explain how we compute $eval_i^{\mathcal{X}}$ from $eval_i$: If $s_{j_0}<x< t_{j_0}$ then we set $\mathsf{f}(x)=j_0$.\\

If $\mathsf{f}(x)=\mathsf{f}(y)$ then $[x;y]\subset [s_{j_0};t_{j_0}]$ and $eval_i^{\mathcal{X}}(x,y)=eval_i(x,y)$,\\
 else we have 
 \begin{eqnarray*}
 eval_i^{\mathcal{X}}(x,y)&=&\mu_i\big([x,y]\cap \mathcal{X}\big)=\mu_i\Big([x,y] \cap \big( \sqcup_{j=1}^k [s_j,t_j]\big) \Big)\\
 &=&eval_i(x,y)-\sum_{j=\mathsf{f}(x)}^{\mathsf{f}(y)-1} eval_i(t_{j},s_{j+1}).
\end{eqnarray*}
As  $\mu_i([t_{j},s_{j+1}])=eval_i(t_{j},s_{j+1})$ is known thanks to the previous calls of the algorithm, the query  $eval_i^{\mathcal{X}}(x,y)$ needs just one new query: $eval_i(x,y)$.\\

For the $cut_i^{\mathcal{X}}$ query we proceed in the following way:\\
Suppose that we want to compute $cut_i^{\mathcal{X}}(x,a)$.\\
First, compute $eval_i(x,t_{\mathsf{f}(x)})$. As we know $\mu_i([s_j;t_j])$ for $j=1,\ldots,k$ then with all these values we can deduce in which interval $[s_{1},t_1]$,\ldots, $[s_{k},t_{k}]$ is the cutpoint $y$. We denote by $[\alpha,\beta]$ this interval. Thanks to the knowledge of $\mu_i([s_j;t_j])$ and $\mu_i([x,t_{\mathsf{f}(x)}])$ we can also get $a'=eval_i^{\mathcal{X}}(x,\alpha)$. Then we have:\\ $cut_i^{\mathcal{X}}(x,a)=cut_i(\alpha,a-a')$.\\
Therefore, $cut_i^{\mathcal{X}}$ needs two new queries in the Robertson-Webb model: $eval_i(x,t_{\mathsf{f}(x)})$ and $cut_i(\alpha,a-a')$.\\

In conclusion, in Step \ref{step1:aristo} the algorithm applied with $\eta$ measures uses $\eta$ $cut_i^{\mathcal{X}}$ queries, thus these queries can be computed with $2 \eta$  queries in the Robertson-Webb model.  In Step \ref{step2:aristo}, it uses  $\eta(\eta-1)$ $eval_i^{\mathcal{X}}$ queries. These queries can be computed with $\eta(\eta-1)$ queries in the Robertson-Webb model. Therefore, each call of the algorithm applied with $\eta$ measures uses $\eta(\eta+1)$ queries in the Robertson-Webb model of computation. Furthermore, in the worst case, at each call of the algorithm only one player get a part of the cake. Thus we use at most 
$$n^2+\sum_{\eta=1}^{n-1} \eta(\eta+1)=n^2+ \sum_{\eta=1}^{n-1}\eta^2 +\sum_{\eta=1}^{n-1} \eta=n^2+ \dfrac{n(n-1)(2n-1)}{6}+\dfrac{n(n-1)}{2} \in \bigO(n^3)$$ queries in the Robertson-Webb model.
\end{proof}

From the previous propositions we get:
\begin{Thm}
There exists an aristotelian proportional fair division algorithm which uses at most $\bigO(n^3)$ queries in the Robertson-Webb model of computation.
\end{Thm}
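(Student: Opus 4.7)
The plan is to take the algorithm \texttt{AristoProp} as the explicit witness and simply collect the three preceding propositions. Concretely, I would set $\mathcal{F}:=\texttt{AristoProp}$, run it on inputs $\underline{\mu}=[\mu_1,\ldots,\mu_n]$ with $\mathcal{X}=[0,1]$, and verify each of the three required properties in turn.

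First I would invoke Proposition~\ref{prop:aristo_prop} to guarantee that the recursion terminates (the number of unassigned players strictly decreases at each call, since the first player is always served in Step~\ref{step4:aristo}) and that the resulting division is aristotelian: whenever $\mu_p=\mu_q$, the three cases of that proof (both players in $\underline{\mu}_2$, both in the same sublist $\underline{\mu}_{1,m}$, or both in $\mathcal{E}$) cover all possibilities, so $\mu_p(\mathcal{F}(X,\underline{\mu},p))=\mu_q(\mathcal{F}(X,\underline{\mu},q))$. Next, Proposition~\ref{prop:aristo_prop2} gives $\mu_i(\mathcal{F}(X,\underline{\mu},i))\geq \mu_i(X)/n$ for every $i$: the players in $\mathcal{E}$ receive a piece they themselves evaluate at $\mu_1(A_1)=\mu_1(\mathcal{X})/n$, while those in $\underline{\mu}_2$ or $\underline{\mu}_{1,m}$ meet the proportional bound by the inductive inequalities $(\sharp)$ and $(\sharp\sharp)$.

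Finally, Proposition~\ref{prop:complexityaristot} supplies the complexity bound: each recursive call on $\eta$ players is implemented with $\eta(\eta+1)$ Robertson–Webb queries (the $\eta$ cuts of Step~\ref{step1:aristo} and the $\eta(\eta-1)$ evaluations of Step~\ref{step2:aristo}, counting the two-query and one-query simulations of $cut_i^{\mathcal{X}}$ and $eval_i^{\mathcal{X}}$), and in the worst case only one player is served per call, so the total cost is bounded by $n^2+\sum_{\eta=1}^{n-1}\eta(\eta+1)\in\bigO(n^3)$.

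Since all three ingredients are already established, the only thing to verify carefully is that the three propositions can indeed be glued without extra hypotheses — i.e.\ that the same run of \texttt{AristoProp} witnesses all three properties simultaneously. This is immediate because each proposition is stated for the same algorithm on the same inputs. Hence no genuine obstacle remains; the theorem follows by direct conjunction of Propositions~\ref{prop:aristo_prop}, \ref{prop:aristo_prop2}, and~\ref{prop:complexityaristot}.
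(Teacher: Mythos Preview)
Your proposal is correct and matches the paper's own argument exactly: the paper simply states ``From the previous propositions we get'' the theorem, i.e.\ it combines Propositions~\ref{prop:aristo_prop}, \ref{prop:aristo_prop2}, and~\ref{prop:complexityaristot} just as you do. No additional work is needed.
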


As already mentioned in the introduction, this theorem says that if we just want an aristotelian proportional fair division it is not necessary to use an envy-free algorithm which uses an exponential number of queries. 

 
\subsection{A symmetric and proportional cake cutting algorithm}
 In Section~\ref{sec:symenvyfree}, we have proposed a symmetric and envy-free protocol, this gives then a proportional and symmetric protocol. With this approach we need to compute $n!$ envy-free fair divisions. This raises the following question: Do we need to compute  a proportional fair division for all the possible permutations to get a proportional and symmetric division algorithm?   \\
  In this subsection we are going to show that there exists a symmetric and proportional algorithm which uses $\bigO(n^3)$ queries in the Robertson-Webb model.\\

The idea of the algorithm is a kind of improvement of the aristotelian algorithm. Indeed, in the aristotelian algorithm if two players get a portion at the same stage of the algorithm then they will evaluate their portion in the same way. Here, we construct an algorithm in order to have also the following property: a player will always receive a portion at the same stage of the algorithm whatever  his or her position in the input list $\underline{\mu}$ is.\\

Our algorithm works as follows: Instead of asking to the first player to divide the cake in $n$ equal parts, we are going to ask to all players to cut the cake in $n$ equal parts for their own measures. Then, we will select the ``smallest partition" relatively to the graded order. Thus, we obtain a partition $X=\sqcup_{j}A_j$ independent of the order of the measures.\\
Next, we compute all maximal allocations $\mathcal{A}$ relatively to this partition. For all of these allocations we consider the set $\mathcal{E}_{\mathcal{A}}$, constructed as follows: $i \in \mathcal{E}_{\mathcal{A}}$ if and only if $i$ belongs to the maximal allocation $\mathcal{A}$ and for all $j$, we have: $\mu_i(A_j)=\mu_i(X)/n$. These sets will play the same role as the set $\mathcal{E}$ in the algorithm \texttt{AristoProp}. However, here we have several sets $\mathcal{E}_{\mathcal{A}}$ and then we have to choose one of them.
 We select then a maximal allocation $\mathcal{A}$ where the portions associated to the players in $\mathcal{E}_{\mathcal{A}}$ appear in the leftmost part of the cake. Thus this choice is still independent of the order or the players. At last, we give these portions to their associated players. If a portion $A_j$ belongs to the selected maximal allocation is not given to a player then this portion is used to construct a subcake $\mathcal{X}_1$ as in the aristotelian case. If a portion $A_j$ is not in the selected allocation then this portion is used to construct the subcake $\mathcal{X}_2$.
Our algorithm is then constructed in a way such that we  always associate the same players to the same subcake $\mathcal{X}_1$ or $\mathcal{X}_2$. As we repeat  our strategy on $\mathcal{X}_1$ and $\mathcal{X}_2$ we get a symmetric algorithm. \\


\texttt{SymProp}\\
\textsf{Inputs:} $\underline{\mu}=[\mu_1,\ldots,\mu_{n}]$, $\mathcal{X} \subset [0,1]$.\\
\textsf{Outputs:} $\mathcal{X}=\mathcal{F}(\mathcal{X},\underline{\mu},1) \sqcup \ldots \sqcup \mathcal{F}(\mathcal{X},\underline{\mu},n)$, where $\mathcal{F}(\mathcal{X},\underline{\mu},i)$ is a finite union of disjoint intervals and $\mathcal{F}(\mathcal{X},\underline{\mu},i)$ is given to the $i$-th player.

\begin{enumerate}
\item \label{step1:sym}\%\emph{Ask all players to cut the cake in $n$ parts with values $\mu_i(\mathcal{X})/n$.} \%\\
For $i$ from $1$ to $n$ do\\
\hphantom{bla} $x_{i,0}:=\min_{x \in \mathcal{X}}(x)$\\
\hphantom{bla} For $j$ from 1 to $n$ do\\
\hphantom{blabla} $x_{i,j}:=cut_i^{\mathcal{X}}(x_{i,j-1},\mu_i(\mathcal{X})/n)$.\\

\item \label{step2:sym}\%\emph{ Find the smallest partition $\mathcal{X}=\sqcup_{j=1}^n A_j$ for the graded order. }\%\\
Compute $(x_{0,0}, \ldots, x_{0,n}):=\min_{\succ_{gr}} \{ (x_{i,0},\ldots,x_{i,n}) \, | \, i=1,\ldots,n \}$.\\
For $j$ from $1$ to $n$ do\\
\hphantom{bla} Set $A_j:=[x_{0,j-1};x_{0,j}] \cap \mathcal{X}$.\\

\item \label{step3:sym} \% \emph{Ask each player to evaluate each $A_j$.}\%\\
For $i$ from $1$ to $n$ do  \\
\hphantom{bla}  For $j$ from $1$ to $n$ do\\
\hphantom{blabla} $eval_i^{\mathcal{X}}(x_{0,j-1},x_{0,j})$.\\

\item  \label{step4:sym} Compute the set $S$ of all maximal allocations $\mathcal{A}:=\{ (\mu_{i_1},A_{j_1}), \ldots, (\mu_{i_l},A_{j_l})\}$  relatively to the partition $\mathcal{X}=\sqcup_{j=1}^n A_j$.\\

\item \label{step5:sym} \% \emph{The set $\mathcal{E}_{\mathcal{A}}$ is the  set of indices  $i \in \{i_1, \ldots,i_l\}$   appearing in the allocation $\mathcal{A}$ such that for $j=1, \ldots, n$, we have $\mu_i(A_j)=\mu_i(\mathcal{X})/n$.}\%\\
For all  $\mathcal{A}=\{ (\mu_{i_{1}},A_{j_{1}}), \ldots, (\mu_{i_{l}},A_{j_{l}}) \}$ in $S$ do\\
Set $\mathcal{E}_{\mathcal{A}}:=\emptyset$.\\
\hphantom{bla} For $k$ from 1 to $l$ do\\
\hphantom{blablabla} If the vector $(x_{i_k,0}, \ldots,x_{i_k,n})$ associated to $\mu_{i_{k}}$ satisfied \\
\hphantom{blablabla} $(x_{i_k,0}, \ldots,x_{i_k,n})=(x_{0,0},\ldots,x_{0,\eta})$\\
\hphantom{blablabla} Then $\mathcal{E}_{\mathcal{A}}:=\mathcal{E}_{\mathcal{A}} \cup \{i_k\}$.\\
\hphantom{bla}

 \item \label{step6:sym} \%\emph{Find an allocation $\hat{\mathcal{A}} \in S$ such that the portions associated to the measures $\mu_{i}$ with $i \in \mathcal{E}_{\hat{\mathcal{A}}}$ are on the leftmost part of $\mathcal{X}$.}\%\\
 For all allocations $\mathcal{A}=\{ (\mu_{i_{1}},A_{j_{1}}), \ldots, (\mu_{i_{l}},A_{j_{l}}) \}$ in $S$ do\\
\hphantom{bla} $N_{\mathcal{A}}:=0$;\\
\hphantom{bla} For $k$ from 1 to $l$ do \\
 \hphantom{blabla} If  $i_k \in \mathcal{E}_{\mathcal{A}}$ Then $N_{\mathcal{A}}:=N_{\mathcal{A}}+2^{j_{k}}$.\\

\noindent Find an allocation $\hat{\mathcal{A}}\in S$ such that $N_{\hat{\mathcal{A}}}$ is minimal.\\

\item \label{step7:sym} \% \emph{We consider the allocation $\hat{\mathcal{A}}$. If $i_k \in \mathcal{E}_{\hat{\mathcal{A}}}$ then we give the portion $A_{j_k}$ to the $i_k$-th player else we use $A_{j_k}$ to construct the subcake $\mathcal{X}_1$.} \%\\
Set $\hat{\mathcal{A}}:=\{(\mu_{i_1},A_{j_1}),\ldots,(\mu_{i_l},A_{j_l})\}$.\\
Set $N_{\hat{\mathcal{A}}}:=\sum_{j \in J} 2^j$,\\
Set $\mathcal{E}_1:=\emptyset$, $\mathcal{L}_1:=\emptyset$, $\mathcal{X}_1:=\emptyset$.\\
For $j_k$ in $\{ j_1, \ldots, j_l\}$ do \\
\hphantom{bla} If $j_k \in J$ Then $\mathcal{F}(X,\underline{\mu},i_k):=A_{j_k}$\\
\hphantom{bla If $i_k \in\mathcal{E}_{E}$} Else $\mathcal{E}_1:=\mathcal{E}_1 \cup \{i_k\}$, $\mathcal{L}_1:=\mathcal{L}_1 \cup \{j_k \}$, $\mathcal{X}_1:=\mathcal{X}_1 \cup A_{j_k}$.\\


\item \label{step8:sym} 
Construct a partition $\mathcal{E}_1=\sqcup_{m=1}^d \mathcal{E}_{1,m}$ and a partition $\mathcal{L}_1:=\sqcup_{m=1}^d \mathcal{L}_{1,m}$ sastisfying $(\star)$.\\
Set $\underline{\mu}_{1,m}$ as the list of measures associated to players with index  in $\mathcal{E}_{1,m}$. \\
Set $\mathcal{X}_{1,m}:=\sqcup_{j \in \mathcal{L}_{1,m}} A_{j}$.\\
Set $\mathcal{E}_2:=\{1,\ldots,n\} \setminus \{i_1, \ldots,i_l\}$, $\mathcal{X}_2:=\mathcal{X} \setminus \big(\sqcup_{k=1}^l A_{j_k}\big)$.\\
Set $\underline{\mu}_2$ as the list of measures associated to players with index  in $\mathcal{E}_2$. \\

\item \label{step9:sym} Return\big($\sqcup_{i \in \mathcal{E}} \mathcal{F}(\mathcal{X},\underline{\mu},i) \,\sqcup_{m=1}^d$ \texttt{SymProp} $(\underline{\mu}_{1,m},\mathcal{X}_{1,m})\, \sqcup$ \texttt{SymProp} $(\underline{\mu}_2,\mathcal{X}_2)\big)$.
\end{enumerate}

\begin{Prop}\label{prop:sym_prop}
The algorithm \texttt{SymProp} applied to $\underline{\mu}=[\mu_1,\ldots,\mu_n]$ and  $\mathcal{X}=[0,1]$ terminates,  is symmetric and gives a proportional fair division of $[0,1]$.
\end{Prop}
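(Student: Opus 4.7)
}
Termination and proportionality can be established by direct induction on $n$, closely mirroring the proofs of Propositions \ref{prop:aristo_prop} and \ref{prop:aristo_prop2}. For termination, observe that any player whose cut vector attains the graded minimum $(x_{0,0},\ldots,x_{0,n})$ of Step~\ref{step2:sym} evaluates every piece $A_j$ at exactly $\mu_i(\mathcal{X})/n$, and therefore finds every piece acceptable. Such a player must be included in every maximal allocation $\mathcal{A}$ (otherwise $\mathcal{A}$ could be extended by matching this player to any unused piece), so in particular $|\mathcal{E}_{\hat{\mathcal{A}}}|\ge 1$. Hence the sub-instances passed to the recursive calls in Step~\ref{step9:sym} involve strictly fewer players, which gives termination.

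For proportionality, I will distinguish three cases according to where a given player $i$ ends up. A player in $\mathcal{E}$ receives a piece with value exactly $\mu_i(\mathcal{X})/n$ (this is precisely the content of the condition defining $\mathcal{E}_{\hat{\mathcal{A}}}$ in Step~\ref{step5:sym}). For a player $i\in\mathcal{E}_{1,m}$, the analogue of Remark~\ref{rem:tech} applies: every $A_j$ with $j\in \mathcal{L}_{1,m}$ satisfies $\mu_i(A_j)\ge \mu_i(\mathcal{X})/n$, so $\mu_i(\mathcal{X}_{1,m})\ge (|\mathcal{E}_{1,m}|/n)\,\mu_i(\mathcal{X})$ and the inductive hypothesis on \texttt{SymProp}$(\underline{\mu}_{1,m},\mathcal{X}_{1,m})$ yields at least $\mu_i(\mathcal{X})/n$. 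For a player $i\in\mathcal{E}_2$, the maximality of the allocation forces $\mu_i(A_{j_k})<\mu_i(\mathcal{X})/n$ for each $j_k$, so $\mu_i(\mathcal{X}_2)\ge ((n-l)/n)\mu_i(\mathcal{X})$, and again the inductive hypothesis closes the case.

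The real content of the proposition is symmetry, and my strategy here is to verify, step by step, that every object the algorithm manipulates depends only on the \emph{unordered} data $(\{\mu_i\}_i,\mathcal{X})$. The collection of cut vectors $\{(x_{i,0},\ldots,x_{i,n})\}_i$ produced in Step~\ref{step1:sym} is an unordered function of the measures; hence the graded-minimum vector $(x_{0,0},\ldots,x_{0,n})$ and the resulting partition $\mathcal{X}=\sqcup_j A_j$ are permutation-invariant. The family $S$ of maximal allocations, regarded as a set of matchings between measures and pieces, the associated subsets $\mathcal{E}_{\mathcal{A}}$, and the integers $N_{\mathcal{A}}$ are all determined by the partition together with $\{\mu_i\}_i$. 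Consequently the subset $J=\{j_k:i_k\in\mathcal{E}_{\hat{\mathcal{A}}}\}$ singled out in Step~\ref{step6:sym} (the binary expansion of the minimum value $N_{\hat{\mathcal{A}}}$) is uniquely determined by the unordered data, and so are $\mathcal{E}_{\hat{\mathcal{A}}}$, the sub-lists $\underline{\mu}_{1,m}$, $\underline{\mu}_2$, and the subcakes $\mathcal{X}_{1,m},\mathcal{X}_2$. Induction on $n$ then transports symmetry through the recursive calls.

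The delicate point, and what I expect to be the main obstacle, lies in Step~\ref{step7:sym}: the bijection between players of $\mathcal{E}_{\hat{\mathcal{A}}}$ and the pieces $\{A_{j_k}:j_k\in J\}$ is not canonically pinned down, and in principle it could be perturbed by a permutation of the inputs. This does not break the argument because every player $i_k\in\mathcal{E}_{\hat{\mathcal{A}}}$ evaluates every piece of the partition at exactly $\mu_{i_k}(\mathcal{X})/n$, so $\mu_{i_k}(\mathcal{F}(\mathcal{X},\underline{\mu},i_k))=\mu_{i_k}(\mathcal{X})/n$ no matter which piece of $J$ is assigned to them. A parallel observation handles the ambiguity inside the sub-lists $\underline{\mu}_{1,m}$ where the construction $(\star)$ puts together measures that agree on every $A_j$. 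Combining this ``value-rigidity'' at the current level with the inductive symmetry of the recursive calls will yield the symmetry of \texttt{SymProp}.
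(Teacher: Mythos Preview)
Your proposal is correct and follows essentially the same approach as the paper's proof. Both arguments establish termination by observing that at least one player receives a piece at each call (you are slightly more explicit here, noting that any player realizing the graded minimum accepts every piece and hence lies in every maximal allocation), reduce proportionality to the argument of Proposition~\ref{prop:aristo_prop2}, and prove symmetry by checking that the partition $\sqcup_j A_j$, the set $J$ (via uniqueness of the binary expansion of the minimal $N_{\hat{\mathcal{A}}}$), the set of players receiving a piece, and the recursive sub-instances $(\underline{\mu}_{1,m},\mathcal{X}_{1,m})$ and $(\underline{\mu}_2,\mathcal{X}_2)$ are all independent of the input ordering and of the particular choice of $\hat{\mathcal{A}}$. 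Your identification of the ``delicate point''---that the bijection between $\mathcal{E}_{\hat{\mathcal{A}}}$ and $\{A_j:j\in J\}$ is not canonical but is harmless because every such player values every such piece at exactly $\mu_i(\mathcal{X})/n$---matches the paper's treatment precisely.
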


\begin{proof}
The algorithm terminates since after one call of the algorithm the number of player decreases strictly since at least one player get a part of the cake.\\

Now, we  have to prove that this algorithm is symmetric.\\
First, we remark that in Step~\ref{step2:sym} the partition $\mathcal{X}=\sqcup_j A_j$ is independent of the order of the players given in the input. Indeed, this partition is chosen thanks to the graded order.\\

Second, we define the set $\mathbb{A}$ as the set of portions $A_{j_k}$ given in Step~\ref{step7:sym} and the set $\mathbb{M}$ as the set of measures receiving a portion in Step~\ref{step7:sym}.\\
We are going to show that these two sets are independent of the order of the players and also independent of the choice of $\hat{\mathcal{A}}$ in Step~\ref{step6:sym}.\\

 Indeed, we give $A_{j_k}$ if $j_k \in J$, where $J$ is the set defined by the property $N_{\hat{\mathcal{A}}}=\sum_{j \in J} 2^j$. As the binary expansion of $N_{\hat{\mathcal{A}}}$ is unique, the set $J$ is independent of the order of the players and also independent of the allocation $\hat{\mathcal{A}}$ chosen in Step~\ref{step6:sym}. Thus $\mathbb{A}$ is independent of the order of the players and is also independent of the choice of $\hat{\mathcal{A}}$.\\
The set  $\mathbb{M}$ is  independent of the order of the players and of the choice of $\hat{\mathcal{A}}$ in Step~\ref{step6:sym}.\\
 Indeed, if  $\mu_{i_k} \in \mathbb{M}$ then $(\mu_{i_k},A_{j_k}) \in \hat{\mathcal{A}}$ and $j_k \in J$, where $N_{\hat{\mathcal{A}}}=\sum_{j \in J}2^j$. Therefore as before $\mathbb{M}$ is independent of the choice of $\hat{\mathcal{A}}$ in Step~\ref{step6:sym}. Furthermore, by contruction in Step~\ref{step5:sym} and Step~\ref{step6:sym}, $i_k$ belongs to $\mathcal{E}_{\hat{\mathcal{A}}}$. Thus, $\mu_{i_k} \in \mathbb{M}$ means that the associated  vector $(x_{i_k,0}, \ldots,x_{i_k,n})$ satisfies the equality $(x_{i_k,0}, \ldots,x_{i_k,n})=(x_{0,0},\ldots,x_{0,n})$. As the choice of the partition $\mathcal{X}=\sqcup_j A_j$ in Step~\ref{step2:sym} is independent of the order of the players, that is to say the choice of  $(x_{0,0},\ldots,x_{0,n})$ is independent of the order of the players, we deduce that $\mathbb{M}$ is indepedent of the order of the players. \\

Now, we remark that, for all $\mu \in \mathbb{M}$ and all $A \in \mathbb{A}$ we have: $\mu(A)=\mu(\mathcal{X})/n$.\\
Indeed, all players $i_k$ associated to a measure $\mu_{i_k} \in \mathbb{M}$ belongs to $\mathcal{E}_{\hat{A}}$ by construction in Step~\ref{step5:sym} and Step~\ref{step6:sym}. Thus $\mu(A)$ is independent of the choice of $\hat{\mathcal{A}}$  in Step~\ref{step6:sym}. \\

Therefore, in Step~\ref{step7:sym} the sets $\mathbb{A}$ and $\mathbb{M}$ and the value $\mu(A)$ for $\mu \in \mathbb{M}$ and $A \in \mathbb{A}$ are independent of the order of the players and independent of the choice of $\hat{\mathcal{A}}$ with  minimal $N_{\hat{\mathcal{A}}}$ in Step~\ref{step6:sym}. Furthermore, we can deduce that:
\begin{itemize}
\item[-] $\mathcal{X}_1$ and $\mathcal{E}_1$ and then $\mathcal{X}_{1,m}$ and $\mathcal{E}_{1,m}$,
\item[-] $\mathcal{X}_2$ and $\mathcal{E}_2$
\end{itemize}
are also independent of the order of the players and of the choice of $\hat{A}$ in Step~\ref{step6:sym}. It follows then that the algorithm is symmetric.\\

The algorithm is proportional.\\
Indeed, the sets $\mathcal{X}_1$ and $\mathcal{X}_2$ are  constructed as in \texttt{AristoProp}. The strategy used by \texttt{SymProp}  is the same than the one used in the algorithm \texttt{AristoProp}.
Thus with the same approach as the one used in Proposition \ref{prop:aristo_prop}  we can deduce that the algorithm \texttt{SymProp} is proportional.
\end{proof}

\begin{Prop}
The algorithm \texttt{SymProp} uses at most $\bigO(n^3)$ queries in the Robertson-Webb model.
\end{Prop}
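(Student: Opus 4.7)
The plan is to adapt, almost verbatim, the complexity analysis of \texttt{AristoProp} in Proposition~\ref{prop:complexityaristot}, with the single structural change that Step~\ref{step1:sym} now uses \emph{every} player as a cutter rather than just the first one. Concretely, I would first bound the number of queries performed by a single non-recursive invocation of \texttt{SymProp} on a subcake $\mathcal{X}$ with $\eta$ remaining players, then bound the total over the recursion tree.

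For a single call with $\eta$ players, Step~\ref{step1:sym} issues $\eta \cdot \eta = \eta^2$ queries of type $cut_i^{\mathcal{X}}$ (each of the $\eta$ players makes $\eta$ cuts), and Step~\ref{step3:sym} issues $\eta^2$ queries of type $eval_i^{\mathcal{X}}$. Steps~\ref{step2:sym}, \ref{step4:sym}, \ref{step5:sym}, \ref{step6:sym}, \ref{step7:sym}, \ref{step8:sym} are purely combinatorial and do not query the players. Next, one converts these $cut_i^{\mathcal{X}}$ and $eval_i^{\mathcal{X}}$ queries to queries in the Robertson-Webb model using exactly the device from the proof of Proposition~\ref{prop:complexityaristot}: as long as the measures $\mu_i([s_j,t_j])$ of the maximal intervals composing $\mathcal{X}$ are stored from previous levels of recursion, each $eval_i^{\mathcal{X}}$ costs one fresh $eval_i$ query and each $cut_i^{\mathcal{X}}$ costs at most two fresh queries (one $eval_i$ to locate the interval containing the cutpoint, plus one $cut_i$ inside it). Hence one call with $\eta$ players consumes at most $3\eta^2$ Robertson-Webb queries, i.e.\ $\bigO(\eta^2)$.

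Then I would analyse the recursion. At the end of Step~\ref{step7:sym} at least one player (the one, or ones, whose cut vector realises the minimum of the graded order) receives a definitive portion and is removed from the problem. Since $\mathcal{E}_{1,1},\ldots,\mathcal{E}_{1,d},\mathcal{E}_2$ partition the remaining players, if we let $T(\eta)$ be an upper bound on the total number of Robertson-Webb queries for a call with $\eta$ players, we obtain the inequality
\[
T(\eta) \;\leq\; C\eta^2 \;+\; \sum_{m=1}^{d} T(|\mathcal{E}_{1,m}|) \;+\; T(|\mathcal{E}_2|),
\qquad \sum_{m=1}^{d}|\mathcal{E}_{1,m}| + |\mathcal{E}_2| \;\leq\; \eta - 1.
\]
Because $\eta^2$ is convex and superadditive on positive integers, the right-hand side is maximised when all remaining players are pushed into a single recursive sub-call, which yields the worst-case recurrence $T(\eta) \leq C\eta^2 + T(\eta-1)$. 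Unfolding this gives $T(n) \leq C\sum_{k=1}^{n} k^2 = \bigO(n^3)$, which is the desired bound.

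The only genuinely delicate point is the bookkeeping that makes the $cut^{\mathcal{X}}$ and $eval^{\mathcal{X}}$ queries costable at the claimed $O(1)$-per-call rate: one must argue that at any moment in the recursion the current subcake $\mathcal{X}$ can be written as $\sqcup_{j=1}^{k}[s_j,t_j]$ where both the $\mu_i([s_j,t_j])$ and the $\mu_i([t_j,s_{j+1}])$ have already been recorded, so that no hidden query is needed to evaluate arbitrary restricted intervals. This follows from the observation that every new endpoint introduced during Step~\ref{step1:sym} is immediately re-evaluated by every player in Step~\ref{step3:sym}, and that the subcakes $\mathcal{X}_{1,m}$ and $\mathcal{X}_2$ passed to the recursive calls in Step~\ref{step9:sym} are unions of the pieces $A_j$ whose measures have just been computed. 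Everything else is a routine repetition of the argument of Proposition~\ref{prop:complexityaristot}.
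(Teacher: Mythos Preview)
Your proposal is correct and follows essentially the same approach as the paper: the paper's own proof is a two-sentence reference back to Proposition~\ref{prop:complexityaristot}, noting only that Step~\ref{step1:sym} now uses $n^2$ cut queries and Step~\ref{step3:sym} uses $n(n-1)$ (you write $\eta^2$, which actually matches the loop bounds in Step~\ref{step3:sym} as written) eval queries per call, and then invoking the earlier bookkeeping and recursion analysis verbatim. Your version simply spells out that reference in full, including the recursion-tree bound and the inductive invariant on the subcake decomposition; there is no substantive difference in strategy.
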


\begin{proof}
In Step \ref{step1:sym} we use $n^2$ $cut_i^{\mathcal{X}}$ queries, in Step \ref{step3:sym} we use $n(n-1)$ $eval_i^{\mathcal{X}}$ queries. Thus as shown in Proposition \ref{prop:complexityaristot} we use at most $\bigO(n^3)$ queries in the Robertson-Webb model.
\end{proof}



\begin{Rem}
Suppose that all the measures $\mu_i$ are equal to the Lebesgue measure on $[0,1]$. Then in Step \ref{step2:sym} of the algorithm we have 
$$(x_{0,0},\ldots,x_{0,n})=\big(0,1/n,2/n,\ldots,(n-1)/n,1\big),$$
and then $S$ contains $n!$ allocations.\\
Thus in Step \ref{step6:sym} we compute $n!$ times the number $N_{\mathcal{A}}=2+2^2+\cdots+2^n$.\\
Therefore, there exists a situation where the algorithm computes at least $n!$ sums.\\

This is not the only situation where we need to perform an exponential number of arithmetic operations. Another example is the following: Consider $2n+1$ players, suppose that the measure associated to the first $n$ players is the Lebesgue measure on $[0;1]$ and the measure associated to the other players is concentrated on $[\frac{2n}{2n+1},1]$. Then in Step \ref{step2:sym} of \texttt{SymProp} we have 
$$(x_{0,0},\ldots,x_{0,n})=\Big(0,\frac{1}{2n+1},\ldots,\frac{2n}{2n+1},1\Big),$$
and $S$ contains $\binom{2n}{n}$ allocations.
Indeed, in order to get a maximal allocation we have to associated  $n$ intervals among the $2n$ first intervals to the $n$ players with the Lebesgue measure on $[0,1]$. As $\binom{2n}{n} \approx \dfrac{4^n}{\sqrt{\pi n}}$ in this situation we also perform an exponential number of  operations or inequality tests.\\
\end{Rem}
These examples show that the algorithm \texttt{SymProp} needs  a polynomial number of queries in the Robertson-Webb model but needs an exponential number of elementary operations. The combinatorial nature of the problem is processed with classical arithmetic operations and inequality tests.
\section{Conclusion}\label{sec:conclusion}
In this article we have given an algorithm for computing symmetric and envy-free fair division.\\
The complexity in the Robertson-Webb model  of this algorithm increases the complexity of an envy-free fair division by a factor $n!$. This raises the following question: Can we avoid or reduce this factor? \\
Furthermore, we know a lower bound for the number of queries for an envy-free division. This lower bound is $\Omega(n^2)$, see \cite{Procaccia-lowerbound}.  What is the lower bound for the symmetric and envy-free problem? Do we have necessarily a factorial number of queries? In other words, does the lower bound for the symmetric and envy-free fair division belongs to $\Omega(n!)$? Can we get a lower bound for a symmetric and envy-free fair division?\\

In the approximate setting we get  an $\varepsilon$-symmetric and $\varepsilon$-envy free fair division algorithm thanks to the $\varepsilon$-perfect division proposed in \cite{Branzei}. In this case the number of queries is in $\bigO(n^2/\varepsilon)$. However, this algorithm uses an exponential number in $n$ of arithmetic operations and inequality tests.\\ 
This problem appears also in our last algorithm which computes a symmetric and proportional fair division with  $\bigO(n^3)$ queries in the Robertson-Webb model. In this algorithm we solve a sub-problem (the computation of the set $S$) with an exponential number (in $n$) arithmetic operations and inequality tests.\\

Thus in these kinds of situations (Symmetric and Proportionnal or $\epsilon$-perfect fair division) an algorithm  with a polynomial number of queries cannot be considered as a fast algorithm if it uses an exponential number in $n$ of elementary operations.\\
A new model of computation has been suggested in \cite{ChezeBSSRW}. In this model the number of elementary operations must be taken into account. Then, in this model, the algorithm \texttt{SymProp} has not a polynomial complexity and cannot be considered as a fast algorithm. However, in a recent work based on a preprint of this article, Aigner-Horev and Segal-Halevi have shown how to modify \texttt{SymProp} in order to get an algorithm with a polynomial complexity even if we take into account the number of arithmetic operations, see \cite{Segal-Halevi-bipartite}. \\

We have constructed in this article an algorithm giving an aristotelian and proportional algorithm. This algorithm uses less arithmetic operations than  our symmetric and proportional algorithm but these two algorithms use the same number of queries. Is it necessary?\\

At last, the aristotelian notion comes from the Nichomachean Ethics by Aristotle and  one of the contributions of this article is to prove that we can compute an  aristotelian and proportional fair division efficiently (with a polynomial number of queries). This result is interesting since until now all aristotelian proportional fair division algorithms were envy-free algorithms and thus have a huge complexity in the Roberston-Webb model. However, another philosopher, Seneca, would have given a sever conclusion about this work:

\emph{``The mathematician teaches me how to lay out the dimensions of my estates; but I should rather be taught how to lay out what is enough for a man to own.[\ldots] What good is there for me in knowing how to parcel out a piece of land, if I know not how to share it with my brother? [\ldots] The mathematician teaches me how I may lose none of my boundaries; I, however, seek to learn how to lose them all with a light heart."}\\
 Letters Lucilius/Letter 88; Seneca.\\

\textbf{Acknowledgement}: The author thanks Erel Segal-Halevi for pointing to him an imprecision at the end of the algorithm \texttt{SymProp} in a previous version of this work.\\
At last, the author thanks \'Em\`elie, \'Elo\"ise and Timoth\'e for having implicitly suggested to study this problem.

 

\begin{thebibliography}{10}

\bibitem{AzizMackenzie}
H.~Aziz and S.~Mackenzie.
\newblock A discrete and bounded envy-free cake cutting protocol for any number
  of agents.
\newblock In {\em {IEEE} 57th Annual Symposium on Foundations of Computer
  Science, {FOCS} 2016, 9-11 October 2016}, pages 416--427, 2016.

\bibitem{Barbanel}
J.~Barbanel.
\newblock {\em The geometry of efficient fair division}.
\newblock Cambridge University Press, 2005.

\bibitem{BJK}
S.~Brams, M.~Jones, and C.~Klamler.
\newblock \emph{N}-person cake-cutting: There may be no perfect division.
\newblock {\em The American Mathematical Monthly}, 120(1):35--47, 2013.

\bibitem{BramsTaylorarticle}
S.~Brams and A.~Taylor.
\newblock An envy-free cake division protocol.
\newblock {\em The American Mathematical Monthly}, 102(1):9--18, 1995.

\bibitem{BramsTaylor}
S.~Brams and A.~Taylor.
\newblock {\em Fair division - from cake-cutting to dispute resolution}.
\newblock Cambridge University Press, 1996.

\bibitem{Branzei}
S.~Br{\^{a}}nzei and P.~Miltersen.
\newblock A dictatorship theorem for cake cutting.
\newblock In Qiang Yang and Michael Wooldridge, editors, {\em Proceedings of
  the Twenty-Fourth International Joint Conference on Artificial Intelligence,
  {IJCAI} 2015, Buenos Aires, Argentina, July 25-31, 2015}, pages 482--488.
  {AAAI} Press, 2015.

\bibitem{Branzei2017}
S.~Br{\^{a}}nzei and N.~Nisan.
\newblock The query complexity of cake cutting.
\newblock {\em ArXiv e-prints}, abs/1705.02946, 2017.

\bibitem{Cechexistence}
K.~Cechl{\'a}rov{\'a}, J.~Dobo{\v s}, and E.~Pill{\'a}rov{\'a}.
\newblock On the existence of equitable cake divisions.
\newblock {\em Information Sciences}, 228(Supplement C):239 -- 245, 2013.

\bibitem{Cech}
K.~Cechl{\'a}rov{\'a} and E.~Pill{\'a}rov{\'a}.
\newblock On the computability of equitable divisions.
\newblock {\em Discrete Optimization}, 9(4):249 -- 257, 2012.

\bibitem{Chen}
Y.~Chen, J.~Lai, D.~Parkes, and A.~Procaccia.
\newblock Truth, justice, and cake cutting.
\newblock {\em Games and Economic Behavior}, 77(1):284 -- 297, 2013.

\bibitem{Chevaleyre06}
Y.~Chevaleyre, P.~Dunne, U.~Endriss, J.~Lang, M.~Lemaître, N.~Maudet,
  J.~Padget, S.~Phelps, J.~Rodríguez-Aguilar, and P.~Sousa.
\newblock Issues in multiagent resource allocation.
\newblock {\em INFORMATICA}, 30:3--31, 2006.

\bibitem{Chezeequitable}
G.~Ch{\`e}ze.
\newblock Existence of a simple and equitable fair division: A short proof.
\newblock {\em Mathematical Social Sciences}, 87:92 -- 93, 2017.

\bibitem{ChezeBSSRW}
G.~Ch{\`e}ze.
\newblock Cake cutting: Explicit examples for impossibility results.
\newblock {\em Mathematical Social Sciences}, 2019.

\bibitem{DubinsSpanier}
L.E. Dubins and E.~H. Spanier.
\newblock How to cut a cake fairly.
\newblock {\em The American Mathematical Monthly}, 68(1):1--17, 1961.

\bibitem{EdmondsPruhs}
J.~Edmonds and K.~Pruhs.
\newblock Cake cutting really is not a piece of cake.
\newblock {\em ACM Trans. Algorithms}, 7(4):51, 2011.

\bibitem{EvenPaz}
S.~Even and A.~Paz.
\newblock A note on cake cutting.
\newblock {\em Discrete Applied Mathematics}, 7(3):285 -- 296, 1984.

\bibitem{Maniquet}
M.~Fleurbaey and F.~Maniquet.
\newblock Implementability and horizontal equity imply no-envy.
\newblock {\em Econometrica}, 65(5):1215--1219, 1997.

\bibitem{Dynamic}
I.~Kash, A.~Procaccia, and N.~Shah.
\newblock No agent left behind: dynamic fair division of multiple resources.
\newblock In {\em International conference on Autonomous Agents and Multi-Agent
  Systems, {AAMAS} '13, Saint Paul, MN, USA, May 6-10, 2013}, pages 351--358,
  2013.

\bibitem{Kuhn}
H.~Kuhn.
\newblock {\em Economia Matematica}, chapter Some Remarks On Games Of Fair
  Division, pages 87--102.
\newblock Springer Berlin Heidelberg, Berlin, Heidelberg, 2011.

\bibitem{ManabeOkamoto}
Y.~Manabe and T.~Okamoto.
\newblock Meta-envy-free cake-cutting protocols.
\newblock In {\em Mathematical Foundations of Computer Science 2010: 35th
  International Symposium, MFCS 2010, Brno, Czech Republic, August 23-27, 2010.
  Proceedings}, pages 501--512. Springer Berlin Heidelberg, 2010.

\bibitem{MoulinSym}
H.~Moulin.
\newblock Priority rules and other asymmetric rationing methods.
\newblock {\em Econometrica}, 68(3):643--684, 2000.

\bibitem{MoulinBook}
H.~Moulin.
\newblock {\em {Fair Division and Collective Welfare}}, volume~1 of {\em MIT
  Press Books}.
\newblock The MIT Press, March 2004.

\bibitem{Pikhurko}
O.~Pikhurko.
\newblock On envy-free cake division.
\newblock {\em The American Mathematical Monthly}, 107(8):736--738, 2000.

\bibitem{Procaccia-lowerbound}
A.~Procaccia.
\newblock Thou shalt covet thy neighbor's cake.
\newblock In {\em Proceedings of the 21st International Jont Conference on
  Artifical Intelligence}, IJCAI'09, pages 239--244, San Francisco, CA, USA,
  2009. Morgan Kaufmann Publishers Inc.

\bibitem{Procacciasurvey}
A.~Procaccia.
\newblock Cake cutting: Not just child's play.
\newblock {\em Commun. ACM}, 56(7):78--87, July 2013.

\bibitem{Procacciachapter}
A.~Procaccia.
\newblock Cake cutting algorithms.
\newblock In F.~Brandt, V.~Conitzer, U.~Endriss, J.~Lang, and A.~D. Procaccia,
  editors, {\em Handbook of Computational Social Choice}, chapter~13. Cambridge
  University Press, 2016.

\bibitem{ProcWang}
A.~D. Procaccia and J.~Wang.
\newblock A lower bound for equitable cake cutting.
\newblock In {\em Proceedings of the 2017 ACM Conference on Economics and
  Computation}, EC '17, pages 479--495, New York, NY, USA, 2017. ACM.

\bibitem{RoberstonWebbarticle}
J.~Robertson and W.~Webb.
\newblock Near exact and envy-free cake division.
\newblock {\em Ars Combinatoria}, 45:97--108, 1997.

\bibitem{RobertsonWebb}
J.~Robertson and W.~Webb.
\newblock {\em Cake-cutting algorithms - be fair if you can}.
\newblock A {K} Peters, 1998.

\bibitem{Segal-Halevi-bipartite}
E.~Segal-Halevi and E.~Aigner-Horev.
\newblock Bipartite envy-free matching.
\newblock arXiv preprint 1901.09527.

\bibitem{Segal-Halevi}
E.~Segal-Halevi and B.~Sziklai.
\newblock Resource-monotonicity and population-monotonicity in cake-cutting.
\newblock GAMES 2016.

\bibitem{Steinhaus}
H.~Steinhaus.
\newblock {The problem of fair division}.
\newblock {\em Econometrica}, 16(1):101--104, January 1948.

\bibitem{Stromquistexist}
W.~Stromquist.
\newblock How to cut a cake fairly.
\newblock {\em Amer. Math. Monthly}, 87(8):640--644, 1980.

\bibitem{Stromquist}
W.~Stromquist.
\newblock Envy-free cake divisions cannot be found by finite protocols.
\newblock {\em Electr. J. Comb.}, 15(1), 2008.

\bibitem{Thomson2006}
W.~Thomson.
\newblock Children crying at birthday parties. {W}hy?
\newblock {\em Economic Theory}, 31(3):501--521, 2006.

\bibitem{Woeg}
G.~J. Woeginger and J.~Sgall.
\newblock On the complexity of cake cutting.
\newblock {\em Discrete Optimization}, 4(2):213 -- 220, 2007.

\end{thebibliography}

\end{document}